\providecommand{\doi}[1]{
  \begingroup
    \let\bibinfo\@secondoftwo
    \urlstyle{rm}
    \href{http://dx.doi.org/#1}{
      doi:\discretionary{}{}{}
      \nolinkurl{#1}
    }
  \endgroup
}
\newcommand{\R}{\ensuremath{\mathds{R}}}
\newcommand{\Z}{\ensuremath{\mathds{Z}}}
\newcommand{\Obj}{f}
\newcommand{\ObjFeas}{Y}
\newcommand{\ObjSub}{Q}
\newcommand{\ND}{{\ObjFeas}_{\rm nd}}
\newcommand{\bz}{\bar{z}}
\newcommand{\bu}{\bar{u}}
\newcommand{\hN}{\hat{N}}
\newcommand{\hZ}{\hat{Z}}
\newcommand{\BM}{\boldsymbol{M}}
\newcommand{\Bm}{\boldsymbol{m}}
\newcommand{\Bzero}{\boldsymbol{0}}
\newcommand{\Bone}{\boldsymbol{1}}
\newcommand{\onetod}{\{1,\dots,p\}}
\newcommand{\dummy}{\hat{z}}
\newcommand{\GP}{$\textrm{SA}$}
\newcommand{\GPl}{Simplifying assumption}
\newcommand{\NGPl}{No simplifying assumption}
\newcommand{\DocType}{paper}
\newcommand{\lub}{local upper bound}
\newcommand{\ubs}{upper bound set}
\newtheorem{lemma}{Lemma}[section]
\newtheorem{theorem}[lemma]{Theorem}
\newtheorem{remark}[lemma]{Remark}
\newtheorem{definition}[lemma]{Definition}
\newtheorem{proposition}[lemma]{Proposition}
\theoremstyle{definition}
\newtheorem{example}{Example}
\setlist{leftmargin=2\parindent}
\begin{document}

\title{On the representation of the search region\\
in multi-objective optimization\footnote{This work 
was supported by French ANR-09-BLAN-0361 
"GUaranteed Efficiency for PAReto optimal solutions Determination (GUEPARD)"}
\footnote{This work is to appear in \emph{European Journal of Operational Research} 
and is available online at \url{http://www.sciencedirect.com/science/article/pii/S0377221715002386}}
}

\author[1]{Kathrin Klamroth}
\author[2]{Renaud Lacour}
\author[2]{Daniel Vanderpooten}
\affil[1]{Department of Mathematics and Computer Science, 
University of Wuppertal, Germany\\
\texttt{kathrin.klamroth@math.uni-wuppertal.de}}
\affil[2]{PSL, Universit\'e Paris-Dauphine, LAMSADE UMR 7243, F-75016 Paris, France\\
\texttt{\{lacour,vdp\}@lamsade.dauphine.fr}}
% \date{}
\maketitle

\begin{abstract}
Given a finite set $N$ of feasible points
of a multi-objective optimization (MOO) problem,
the \emph{search region} corresponds to the part of the objective space
containing all the points that are not dominated by any point of $N$,
i.e. the part of the objective space which may contain further nondominated points.
In this {\DocType}, we consider a representation of the search region
by a set of tight \emph{\lub{}s} (in the minimization case)
that can be derived from the points of $N$.
Local upper bounds play an important role in methods
for generating or approximating the nondominated set of an MOO problem,
yet few works in the field of MOO address their efficient incremental determination.
We relate this issue to the state of the art in computational geometry
and provide several equivalent definitions of \lub{}s that are meaningful in MOO.
We discuss the complexity of this representation in arbitrary dimension, 
which yields an improved upper bound on the number of solver calls 
in epsilon-constraint-like methods to generate the nondominated set 
of a discrete MOO problem.
We analyze and enhance a first incremental approach 
which operates by eliminating redundancies among \lub{}s.
We also study some properties of \lub{}s, especially concerning the issue
of redundant \lub{}s,
that give rise to a new incremental approach which avoids such redundancies.
Finally, the complexities of the incremental approaches are compared 
from the theoretical and empirical points of view.

\bigskip
\textbf{Keywords: Multiple objective programming, Search region, Local upper bounds, Generic solution approaches}
\end{abstract}

\section{Introduction}
Most solution approaches in multi-objective optimization (MOO)
aimed at outputting a set of ``good'' solutions
iteratively generate candidate solutions.
Generally, a pool of solutions is maintained and updated 
when new solutions arrive.
The pool provides information which is used to decide 
whether a new solution should be inserted and whether old solutions should be removed.
It can also be used to guide the search process within the objective space.
In particular, from the images in the objective space of the pool solutions,
we can define the part of the objective space containing all points 
that none of these images dominate, 
which we refer to as the \emph{search region}.

The concept itself is well known in the field. 
Especially in the two dimensional or bi-objective case,
it is a key tool of the two-phase and branch and bound methods. 
In the two-phase method \citep[see][]{UluTeg95}, adjacent extreme nondominated points
computed in the first phase define triangles 
which delimit zones where all other nondominated points lie.
The so-called \emph{local nadir} points 
corresponding to the right angles of these triangles act as 
\emph{local upper bounds} that together define the search region,
assuming that the objectives are to be minimized. 
This upper bounding part is also one of the foundations 
of multi-objective branch and bound \citep[see][]{SouSpa08a}. 
Actually, the representation of the search region through local upper bounds
makes it possible to test the existence of the intersection 
between the search region 
and a convex lower bound 
on the feasible points associated to a search node 
and to decide whether to fathom the search node or not.

Given a discrete set of points $Q\subset \R^p$, \citet{KapRubShaVer08}
consider \emph{maximal empty orthants} with respect to $Q$, 
which contain no point of $Q$ and are maximal for this property under inclusion.
Assuming that the points of $Q$ are feasible points of an MOO problem, 
the union of all maximal empty orthants corresponds
to the search region 
associated to $Q$
and their apexes correspond to local upper bounds in the context of MOO.
\citet{KapRubShaVer08} give an algorithm 
for the generation of all maximal empty orthants, and hence
for the computation of all local upper bounds of the search region.
However it requires that the input points are given 
in a nondecreasing order of some component
and in this sense does not directly imply an incremental approach.

\citet{PrzGanEhr10} consider local upper bounds in arbitrary dimension
for a generalization 
of the two-phase method to problems with arbitrarily many objectives.
They propose an online algorithm to carry out the update 
of the local upper bounds as soon as new feasible points are discovered,
but they do not consider the complexity of this operation.

Several solution methods to generate the nondominated set 
iteratively solve linear programs parameterized by local upper bounds
\citep[see][]{SylCre07}, possibly including redundancies \citep[see][]{LokKok12, KirSay14}.
In \citet{SylCre07} 
each \lub{} is determined by solving an integer linear program.

Also, \citet{DaeKla14} propose to compute \emph{boxes}
for three dimensional MOO problems 
that are defined by a common lower bound and several upper bounds 
and decompose the search region.
They develop an efficient incremental algorithm,
that avoids redundancies,
to update the decomposition each time a nondominated point is found,
through e.g. the optimization of a pseudo-distance function 
parameterized by the defining points of the box.
In particular, they show that,
in the three dimensional case,
the search region can be described
by $2n+1$ boxes if the number of known feasible points is $n$.

The {\DocType} is organized as follows. 
Section~\ref{sec:formulation_def} sets some notations, formally defines 
the concepts of \emph{search region} and \emph{upper bound set}, 
then motivates their use in MOO. 
Section~\ref{sec:existence} shows the existence and uniqueness of \lub{}s
through a first algorithm for which we discuss some enhancements.
Section~\ref{sec:properties} investigates some properties 
of the elements of upper bound sets that yield another approach
to compute an upper bound set.
Section~\ref{sec:complexity_experiments} is devoted to the complexity aspects
related to the representation of the search region by a set of \lub{}s 
and the comparison of the two approaches 
from both theoretical and empirical points of view.
Finally, Section~\ref{sec:conclusions} provides conclusions and perspectives.

\section{Background and motivations}\label{sec:formulation_def}

\subsection{Multi-objective optimization setting}

We consider MOO problems 
\begin{equation}\label{eq:MOP}
\begin{array}{ll} 
\min        & \Obj(x)=(\Obj_1(x),\dots,\Obj_p(x)) \\
\mbox{s.t.} & x \in X
\end{array}
\end{equation}
with feasible set 
$X \neq \emptyset$
and with $p \geq 2$ objective functions $\Obj_j:X \to \R$, $j\in\onetod$.
Let $\ObjFeas=\Obj(X)$ denote the set of all feasible points in the objective space.
For all $j\in\onetod$ and $x\in X$, 
we assume, for any instance of an MOO problem,
that $m<\Obj_j(x)<M$ for some $m,M\in \R$,
or that such values $m$ and $M$ exist that bound the area of interest 
for the decision maker.
We will refer to $Z={(m,M)^p}$ as the $p$-dimensional
\emph{search interval},
a set that contains all feasible or at least all relevant points.
We denote by $\hZ=[m,M]^p$ the closure of $Z$.

The Pareto concept of optimality for MOO problems 
is based on the componentwise orderings of $\R^{p}$ 
defined for $ z^1, z^2 \in \R^{p}$ by
$$
\begin{array}{cccc}
z^1 \leqq z^2 & \text{($z^1$ weakly dominates $z^2$)} & \Leftrightarrow & z^1_j \leq z^2_j, \quad \
  j\in\onetod ,\\
z^1 \leq z^2  & \text{($z^1$ dominates $z^2$)} & \Leftrightarrow & z^1 \leqq z^2 
\quad \mbox{and} \quad z^1 \neq z^2, \\
z^1 < z^2     & \text{($z^1$ strictly dominates $z^2$)} & \Leftrightarrow & z^1_{j} < z^2_{j}, \quad  \ j
= 1,\dots, p.
\end{array}
$$
A point $z^2 \in \R^{p}$ is called {\em dominated} by $z^1 \in \R^{p}$ 
if $ z^1 \leq z^2$. 
If, moreover, $ z^1 < z^2$ then $z^2$ is called 
{\em strictly dominated} by $z^1$.
A subset $N$ of $Z$ is \emph{stable for the dominance relation $\leq$} 
or simply \emph{stable} 
if for any $z^1,z^2\in N$, $z^1\not\leq z^2$.
For any subset $\ObjSub$ of $\R^{p}$, $\ObjSub_{\rm nd}$ is the set 
of all nondominated points of $\ObjSub$, i.e. 
$\ObjSub_{\rm nd}= \{z \in \ObjSub: \mbox{there exists no } \bz \in \ObjSub 
\mbox{ with } \bz \leq z \}$.
We refer to $\ND$ as {\em the nondominated set} of \eqref{eq:MOP},
and every point $z\in\ND$ is called \emph{nondominated}.

Note that $M$ must therefore be strictly greater than the component values
of the nadir point, defined as the componentwise maximum
of the nondominated points, i.e. $(\max\{z_j : z\in \ND\})_{j\in\onetod}$.

We define some general notations. 
We denote by $\BM$ the $p$-dimensional vector $(M,\dots,M)$ 
and analogously the $p$-dimensional vector $\Bm=(m,\dots,m)$ 
and the $p$-dimensional all-ones vector $\Bone$.
For any $z\in\R^p$, we let $z_{-j}$ be the $(p-1)$-dimensional
vector of all components of $z$ excluding component~$j$, for a given
$j\in\{1,\dots,p\}$. 
Finally, for any $z,a\in\R^{p}$ and any $j\in\{1,\dots,p\}$, $(z_{j},a_{-j})$
denotes the vector $(a_{1},\dots,a_{j-1},z_{j},a_{j+1},\dots,a_{p})$. 
Such a vector will be referred to 
as the $j$th projection of vector~$z$ on vector~$a$.

\subsection{The search region}\label{sub:the_search_region}

In the following definition, we formalize  the concept of \emph{search region}
which we presented in the introduction.

\begin{definition}\label{def:sr}
Let $N$ be a finite and stable set of feasible points.
The \emph{search region} for $\ND \setminus N$, denoted by $S(N)$,
contains all the points 
in $Z$ that could be nondominated given $N$, 
or alternatively, excludes all the points in $Z$ 
that are dominated by at least one point in $N$, that is:
 \begin{equation}\label{eq:searchregion}
\begin{array}{rcl}
 S(N) & = & \{ z \in  Z\,:\, \forall\bz\in N,\, 
  \bz\not\leqq z\}\\
 & = &  Z \setminus \{ z\in Z \,:\, \exists \bz\in N \text{~with~} \bz\leqq z\}
\end{array}
\end{equation}
\end{definition}

Note that the search region $S(N)$ excludes the points in $N$,
since they are already known, that is, $S(N)\cap N = \emptyset$.

In some cases, $N$ is a subset of $\ND$ obtained by some 
scalarizing function. 
More generally, $N$ may contain any feasible point, 
no matter how it is obtained, e.g. by any heuristic procedure.
It could even be any stable set of not necessarily feasible points 
from $Z$, provided none of its points dominate
any point of the nondominated set $\ND$.

Regarding the stability condition on $N$,
it can be easily seen that the set $S(N)$ is not affected 
if a point dominated by another point of $N$ is added. In other words:
\begin{remark}
For any set of points $\ObjSub$, we have $S(\ObjSub)=S(\ObjSub_{\rm nd})$, 
i.e., both sets induce the same search region.
\end{remark}
\noindent Consequently, the assumption that the set $N$ is stable 
can be made without loss of generality.

\subsection{Explicit representation of the search region by \lub{}s}\label{sub:explicit}

Our purpose is to find an explicit and concise 
characterization of $S(N)$ using a finite set $U(N)$ of minimal \emph{\lub{}s},
which could also be referred to as \emph{local nadir points} 
or \emph{maximal points} (for the dominance relation).
We will refer to $U(N)$ as an \emph{\ubs{}} for the search region $S(N)$ in the following.

Every \lub{} $u\in U(N)$ defines a \emph{search zone} $C(u)\subset Z$ as
$$C(u)=\{z\in Z\,:\, z< u\},$$
and the search region $S(N)$ is covered by the union of these search zones.
In order to possibly include any point of $Z$ in a given search zone $C(u)$,
the possible values for $u$ 
should include the boundary of $Z$. 
So in general $U(N)$ is a subset of $\hZ$, the closure of the search interval $Z$.

In the following, we give three alternative definitions for \ubs{}s 
and show their equivalence.

\begin{definition}\label{def:upperbounds1}
Let $N\subset Z$ be a finite and stable set of points. 
A set $U(N)\subset \hZ$ is called an \emph{\ubs{} 
with respect to $N$} if and only if
\begin{itemize}
 \item[(1)] $S(N)= \bigcup_{u\in U(N)} C(u)$ and
 \item[(2)] $\forall u^1,u^2\in U(N),\,C(u^1)\not\subset C(u^2)$.
\end{itemize}
\end{definition}

While condition (1) in Definition \ref{def:upperbounds1} guarantees that the search region $S(N)$ is
\emph{exactly} represented by the search zones induced by $U(N)$,
condition (2) ensures minimality of the set $U(N)$ in the sense that no redundant search zones are contained in the representation.
Observe that this definition can be seen as a natural extension of the concept of \emph{upper bound} in the one-dimensional case.
If $p=1$, a stable set may either be empty ($N=\emptyset$), or it may consist of exactly one point ($N=\{\bz\}$). The corresponding search region is then uniquely represented by one point, namely
$\bar{u}=\BM$ in the first case and $\bar{u}=\bar{z}$ in the latter case.

As an example, we describe the situation in the two-dimensional case.
\begin{example}\label{ex:bi_obj}
Let $N=\{(z_{1}^{1},z_{2}^{1}),\dots,$ $(z_{1}^{n},z_{2}^{n})\}$ be
a stable set of two-dimensional points (with $n \geq 1$). 
In the bi-objective case, 
the points in any stable set $N$ can be ordered 
such that the objective values are strictly increasing in the first objective 
and strictly decreasing in the second objective.
Hence we can assume that $z_{1}^{1}<\dots<z_{1}^{n}$
and $z_{2}^{1}>\dots>z_{2}^{n}$. 
The search region consists of the union of search zones 
defined by pairs of consecutive points in $N$.
Thus the upper bound set associated to $N$ is
$$U(N)=\left\{\left(z_{1}^{1},M\right), \left(z_{1}^{2},z_{2}^{1}\right), \left(z_{1}^{3},z_{2}^{2}\right),
\dots, \left(z_{1}^{n},z_{2}^{n-1}\right), \left(M,z_{2}^{n}\right)\right\}$$

We illustrate this example 
in Figure~\ref{fig:illus_def_1}.
\begin{figure}[!htbp]
\begin{centering}
\includegraphics{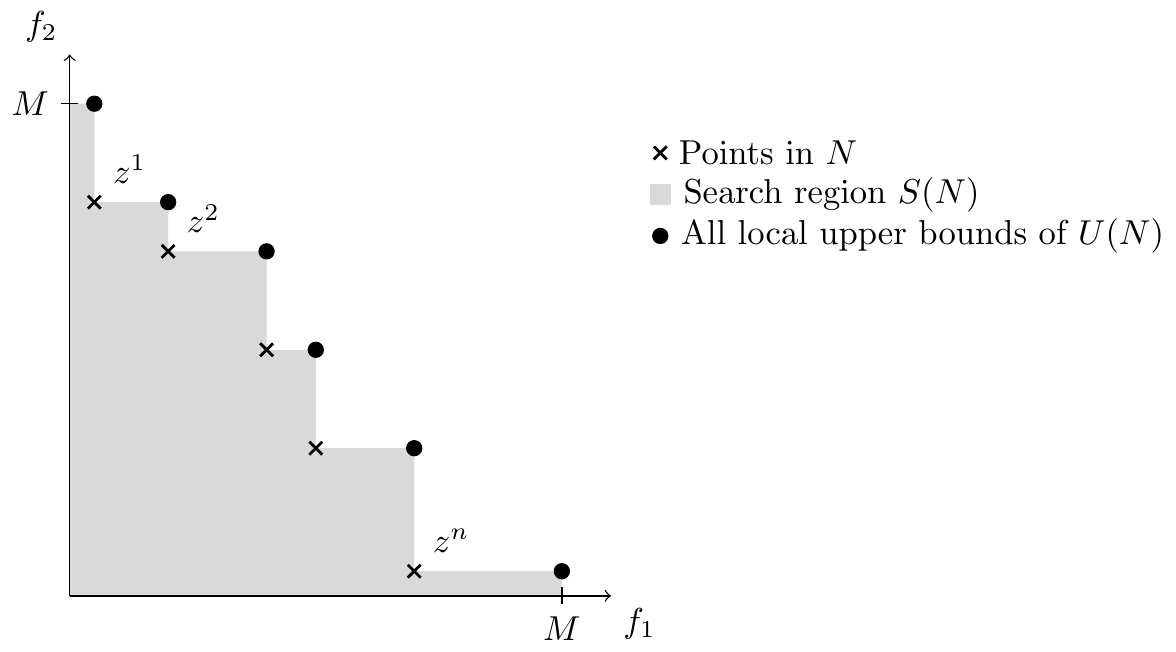}
\par\end{centering}

\caption{Illustration of the concepts of 
\emph{search region} and \emph{\lub{}} ($p=2$)\label{fig:illus_def_1}}
\end{figure}
\end{example}

The conditions of Definition \ref{def:upperbounds1} 
can immediately be reformulated 
in terms of pairwise comparisons 
between points in $S(N)$ and $U(N)$.

\begin{proposition}\label{def:upperbounds2}
Let $N\subset Z$ be a finite and stable set of points. 
Then $U(N)\subset\hZ$ is an \ubs{} with respect to $N$ 
if and only if
\begin{itemize}
 \item[(1a')] $\forall z\in S(N) \,\exists u\in U(N) \, :\, z<u$, 
 \item[(1b')] $\forall z\in Z\setminus S(N)\, \forall u\in U(N) \,:\, z\not< u$, and
 \item[(2')] $\forall u^1,u^2\in U(N) \,:\, u^1\not\leq u^2$.
\end{itemize}
\end{proposition}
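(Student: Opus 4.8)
The plan is to prove the biconditional by translating each of the two set-theoretic conditions of Definition~\ref{def:upperbounds1} into the pointwise statements of the proposition, the essential tool being the correspondence between inclusion of search zones and the componentwise order of their apexes. I would treat condition~(1) and condition~(2) separately, since (1) unfolds into (1a') and (1b') while (2) is equivalent to (2').

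For condition~(1), I would first rewrite the union explicitly: by the definition of $C(u)$ one has $\bigcup_{u\in U(N)}C(u)=\{z\in Z:\exists u\in U(N),\,z<u\}$. The set equality $S(N)=\bigcup_{u}C(u)$ is then equivalent to the conjunction of its two inclusions. The inclusion $S(N)\subseteq\bigcup_u C(u)$ says precisely that every $z\in S(N)$ is strictly dominated by some $u\in U(N)$, which is (1a'). The reverse inclusion $\bigcup_u C(u)\subseteq S(N)$, read through its contrapositive, says that no $z\in Z\setminus S(N)$ is strictly dominated by any $u\in U(N)$, which is (1b'). This part is a routine unfolding of definitions.

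The substance lies in the equivalence of (2) and (2'), which I would base on the geometric lemma that, for apexes $u^1,u^2$ whose search zones are nonempty, $C(u^1)\subseteq C(u^2)\iff u^1\leqq u^2$. The direction $u^1\leqq u^2\Rightarrow C(u^1)\subseteq C(u^2)$ is immediate, since $z<u^1\leqq u^2$ gives $z<u^2$ by transitivity. For the converse I would argue by contraposition: if $u^1_j>u^2_j$ for some $j\in\onetod$, then (using $u^1_j>u^2_j\geq m$, hence $u^1_j>m$, together with nonemptiness of $C(u^1)$) I can choose a witness $z\in Z$ with $z_k<u^1_k$ for all $k$ but $z_j\geq u^2_j$, so that $z\in C(u^1)\setminus C(u^2)$ and the inclusion fails. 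Combining both directions yields $C(u^1)=C(u^2)\iff u^1=u^2$, so proper inclusion $C(u^1)\subsetneq C(u^2)$ is equivalent to dominance $u^1\leq u^2$; forbidding the former (condition~2) is exactly forbidding the latter (condition~2').

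The main obstacle will be the degenerate and boundary cases in this lemma, namely apexes $u$ lying on the boundary of $\hZ$ with some coordinate equal to $m$, for which $C(u)=\emptyset$ and the equivalence with $\leqq$ breaks down. I would dispose of these first by showing that in an \ubs{} all search zones are nonempty whenever $S(N)\neq\emptyset$ — otherwise an empty zone would be a proper subset of any nonempty one, violating condition~(2) — so that the witness construction above is always available. Care is also needed with the interplay between the strict relation $<$ defining $C(u)$ and the weak relation $\leqq$ governing apex order, in order to guarantee that the constructed point $z$ genuinely lies in the open search interval $Z=(m,M)^p$.
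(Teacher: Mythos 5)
Your decomposition is the same as the paper's (whose proof is a two-line assertion of exactly these two equivalences), and most of your expansion is sound: the unfolding of condition~(1) into (1a') and (1b') is routine and correct, and your lemma that $C(u^1)\subseteq C(u^2)\iff u^1\leqq u^2$ for apexes with \emph{nonempty} search zones is correct, witness construction included. You were also right to isolate apexes with a coordinate equal to $m$ (equivalently, empty search zones) as the delicate point. The gap is that your way of disposing of them is circular in one direction of the biconditional. To exclude empty zones you invoke condition~(2) of Definition~\ref{def:upperbounds1} (``an empty zone would be a proper subset of any nonempty one, violating condition~(2)''). That works when you assume the Definition and derive (1a'), (1b'), (2'). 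But in the converse direction condition~(2) is the \emph{conclusion}: assuming (1a'), (1b'), (2'), you must rule out $C(u^1)\subsetneq C(u^2)$, and the sub-case $C(u^1)=\emptyset\subsetneq C(u^2)$ is covered neither by your lemma (which requires $C(u^1)\neq\emptyset$) nor by your nonemptiness argument (which requires~(2)). This sub-case is not vacuous a priori: condition (2') alone does not exclude it, since for $p=2$ the apexes $u^1=(m,M)$ and $u^2=(M,c)$ with $m<c<M$ satisfy $u^1\not\leq u^2$ and $u^2\not\leq u^1$, yet $C(u^1)=\emptyset\subsetneq C(u^2)$.

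The repair has to use (1a') and (1b'), not (2). For $N\neq\emptyset$ and each $j\in\onetod$, set $a=\min_{z\in N}z_j$ and consider the points $w^\varepsilon$ with $w^\varepsilon_j=a-\varepsilon$ and $w^\varepsilon_k=M-\varepsilon$ for $k\neq j$: they lie in $S(N)$ for small $\varepsilon>0$, so (1a') and finiteness of $U(N)$ yield a single $u\in U(N)$ with $u\geqq (a,\BM_{-j})$ and $u_k=M$ for $k\neq j$; if $u_j>a$, then (1b') is violated by a point of $Z\setminus S(N)$ of the form $(a+\delta, (M-\delta)\Bone_{-j})$, which is dominated by a minimizer of $z_j$ over $N$ and strictly below $u$. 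Hence $u^{*j}=\left(\min_{z\in N}z_j,\,\BM_{-j}\right)\in U(N)$. Any $u^1\in U(N)$ with $u^1_j=m$ then satisfies $u^1\leq u^{*j}$ (its $j$th component $m$ is strictly below $\min_{z\in N}z_j$ and its other components are at most $M$), contradicting (2'). For $N=\emptyset$ the same limiting argument forces $\BM\in U(N)$ and (2') then forces $U(N)=\{\BM\}$. With this in hand all search zones are nonempty in the backward direction as well, your lemma applies to every pair, and the rest of your argument goes through.
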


\begin{proof}
Conditions (1a') and (1b') together are equivalent 
to condition (1) in Definition \ref{def:upperbounds1}, 
and condition (2') is equivalent 
to condition (2) in Definition \ref{def:upperbounds1}.
\end{proof}

In the case where the search interval is restricted to integer-valued vectors,
i.e. $Z\subset\Z^p$, conditions (1a') and (1b') of Proposition~\ref{def:upperbounds2}
can be further specified since for all $z,z'\in\Z^p$ such that $z<z'$, 
we have $z\leqq z'-\Bone$. 
We briefly restate them in the following remark.

\begin{remark}\label{rem:upperbounds2}
 Assume $Z\subset\Z^p$ and $\BM\in \Z^p$. Under the same hypothesis 
 of Proposition~\ref{def:upperbounds2}, we have
 for the \ubs{} $U(N)\subset \hZ$:
\begin{itemize}
 \item[(1a'')] $\forall z\in S(N) \,\exists u\in U(N) \, :\, z\leqq u-\Bone$ and
 \item[(1b'')] $\forall z\in Z\setminus S(N)\, \forall u\in U(N) \,:
 \, z\not\leqq u-\Bone$
\end{itemize}
\end{remark}

This is particularly useful in the context of the two-phase 
and branch and bound algorithms which we discuss at the end of this section.

A yet alternative characterization of \lub{}s 
that will turn out useful for their efficient determination 
is given in Proposition~\ref{def:upperbounds3}. 
In particular, \lub{}s are exactly those points that 
(i) are not strictly dominated by any of the points in $N$, and 
(ii) are maximal with this property.

\begin{proposition}\label{def:upperbounds3}
 Let $N\subset Z$ be a finite and stable set of points. Then $U(N)\subset \hZ$ 
 is an \ubs{} with respect to $N$ if and only if $U(N)$ consists of all points $u\in \hZ$ that satisfy the following two conditions:
 \begin{itemize}
  \item[(i)] no point of $N$ strictly dominates $u$ and
  \item[(ii)] for any $\bu\in\hZ$ such that $\bu \geq u$, there exists $\bz\in N$ such that $\bz<\bu$, i.e., $u$ is a maximal point with property (i).
 \end{itemize}
\end{proposition}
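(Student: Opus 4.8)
The plan is to single out the set $U^{\star}$ of \emph{all} points $u\in\hZ$ satisfying (i) and (ii) and to show that it coincides with an upper bound set, working throughout with the pairwise reformulation of Proposition~\ref{def:upperbounds2}. The cornerstone is an elementary translation of condition~(i) into the language of search zones: for every $u\in\hZ$,
\begin{equation*}
  \text{no point of } N \text{ strictly dominates } u \quad\Longleftrightarrow\quad C(u)\subseteq S(N).
\end{equation*}
I would prove this first. If $u$ satisfies (i) and $z\in C(u)$, then $z<u$, and any $\bz\in N$ with $\bz\leqq z$ would give $\bz<u$, contradicting (i); hence $z\in S(N)$. Conversely, if some $\bz\in N$ strictly dominates $u$, then $\bz\in C(u)$ while $\bz\notin S(N)$ (as $\bz\leqq\bz$), so $C(u)\not\subseteq S(N)$.

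For the implication that $U^{\star}$ is an upper bound set I would verify (1a'), (1b') and (2'). Conditions (1b') and (2') are quick: every $u\in U^{\star}$ satisfies (i), so $C(u)\subseteq S(N)$ and no point of $Z\setminus S(N)$ lies below any $u$, giving (1b'); and if $u^{1}\leq u^{2}$ with $u^{1},u^{2}\in U^{\star}$, then (ii) applied to $u^{1}$ with $\bu=u^{2}$ yields $\bz\in N$ with $\bz<u^{2}$, contradicting (i) for $u^{2}$, giving (2'). The substantial part is (1a'). Given $z\in S(N)$, since $N$ is finite and $\bz\not\leqq z$ for all $\bz\in N$, a sufficiently small $\varepsilon>0$ makes $u'=z+\varepsilon\Bone\in\hZ$ still satisfy (i). I would then maximize the strictly increasing functional $w\mapsto\sum_{j=1}^{p}w_{j}$ over the set $\{w\in\hZ:\,w\geqq u'\text{ and }w\text{ satisfies (i)}\}$, which is compact because the points satisfying (i) form a closed subset of the bounded set $\hZ$. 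A maximizer $u$ is maximal for $\leqq$ among all (i)-points of $\hZ$ (any (i)-point $\bu\geq u$ would lie in the feasible set and exceed $u$ in the functional), hence $u\in U^{\star}$, and $u\geqq u'>z$ yields $z<u$.

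For the converse, assume $U(N)$ is an upper bound set and prove $U(N)=U^{\star}$ by double inclusion. Two preliminary facts on coordinates are convenient: every $u\in U(N)$ has all coordinates above $m$ (otherwise $C(u)=\emptyset$, violating condition~(2) of Definition~\ref{def:upperbounds1} when $\lvert U(N)\rvert\geq 2$ and forcing the impossible $S(N)=\emptyset$ otherwise, since points of $Z$ with some coordinate small enough are never dominated), and every $u\in U^{\star}$ likewise has all coordinates above $m$ (by (ii), since a coordinate equal to $m$ could be increased slightly while preserving (i)). For $U(N)\subseteq U^{\star}$: each $u\in U(N)$ satisfies (i) by (1b') and the equivalence above; to obtain (ii), suppose some $\bu\in\hZ$ with $\bu\geq u$ still satisfies (i). Then $\bu$, lying above $u$, has coordinates above $m$ and $C(\bu)\subseteq S(N)$, so each point $\bu-\varepsilon\Bone\in S(N)$ is, by (1a'), strictly below some element of the finite set $U(N)$; one such $u'\in U(N)$ recurs along a sequence $\varepsilon\to 0$, and passing to the limit gives $\bu\leqq u'$, whence $u\leq u'$ with both in $U(N)$, contradicting (2'). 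For $U^{\star}\subseteq U(N)$: given $u^{\star}\in U^{\star}$, the same limiting argument applied to $u^{\star}-\varepsilon\Bone\in C(u^{\star})\subseteq S(N)$ produces $u'\in U(N)$ with $u^{\star}\leqq u'$; as $u'\in U(N)\subseteq U^{\star}$ satisfies (i) and $u^{\star}$ is maximal with property (i) by (ii), we must have $u^{\star}=u'\in U(N)$.

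The step I expect to be the main obstacle is the maximality condition (ii) in both directions, together with the closely related existence claim in (1a'). Both hinge on the same two devices: a small upward perturbation that preserves property~(i) (legitimate because $N$ is finite, so the strict inequalities defining (i) are stable), and a passage to the limit that upgrades the strict coverage ``$z<u'$'' guaranteed by (1a') into the weak dominance ``$\leqq$'' needed to compare corner points, using finiteness of $U(N)$ to keep $u'$ fixed along the sequence. Care is also needed with boundary effects---coordinates equal to $m$ produce empty search zones and must be excluded, while coordinates equal to $M$ are admissible---which is exactly why the preliminary observation that elements of $U(N)$ and of $U^{\star}$ have coordinates above $m$ is what keeps all the perturbations inside $\hZ$.
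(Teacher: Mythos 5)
Your proof is correct, and it diverges from the paper's in one substantive way. The ``only if'' half (every element of an upper bound set satisfies (i) and (ii)) follows the same idea as the paper's Claim~1, but you make explicit two things the paper glosses over: the degenerate case of coordinates equal to $m$ (empty search zones), and the limiting/pigeonhole argument that turns the strict coverage $\bu-\varepsilon\Bone<u'$ supplied by (1a') into the weak comparison $\bu\leqq u'$ needed to contradict (2'). The paper compresses that entire step into the assertion that $\bu\notin U(N)$ forces $C(\bu)\not\subset S(N)$, which as written is a non-trivial leap; your version is the honest expansion of it. The genuine difference is in the ``if'' half, specifically the covering condition (1a') for the set $U^{\star}$ of all points satisfying (i)--(ii). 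The paper obtains it parasitically: it writes ``let $U(N)$ denote \emph{the} \ubs{}'', uses Claim~1 to get $U(N)\subset U'(N)$, and concludes $S(N)=\bigcup_{u\in U(N)}C(u)\subset\bigcup_{u'\in U'(N)}C(u')$ --- so its argument presupposes that an \ubs{} exists, a fact only established constructively in Section~3. You instead prove existence of a dominating maximal (i)-point directly, by perturbing $z\in S(N)$ to $z+\varepsilon\Bone$ (legitimate since $N$ is finite) and maximizing $w\mapsto\sum_j w_j$ over the compact set of (i)-points above it. This makes the characterization self-contained and free of any forward reference to the incremental algorithm, at the modest cost of a longer argument and of invoking finiteness of $U(N)$ in the pigeonhole steps (which is part of the paper's intended definition of an \ubs{} in any case).
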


\begin{proof} 
Let $U(N)$ denote {the \ubs{}} with respect to $N$, and let 
$U'(N)$ denote the set of all points satisfying (i) and (ii) above.\\ 

  \emph{Claim 1:} $U(N)\subset U'(N)$.
  Let $u\in U(N)$. We show that $u$ satisfies (i) and (ii).
  \begin{itemize}
  \item[(i)] Assume that there exists a point $\bar{z}\in N$ 
  such that $\bar{z}< u$. 
  Then by condition (1)  of Definition~\ref{def:upperbounds1}, 
  $\bar{z}\in C(u)\subset S(N)$, 
  which contradicts $S(N)\cap N=\emptyset$.
  \item[(ii)] Let $\bar{u}\geq u$ and hence $C(u)\subset C(\bar{u})$. 
  Since $u\in U(N)$, by condition (2) of Definition~\ref{def:upperbounds1},
  we get $\bu\not\in U(N)$.
  Thus, there exists a point $z'\in C(\bar{u})$ 
  such that $z'\in Z\setminus S(N)$. 
  Since $z'\in Z\setminus S(N)$, 
  there exists a point $\bar{z}\in N$ with $\bar{z}\leqq z'$ 
  and hence $\bar{z}<\bar{u}$. 
  \end{itemize}
  
  \emph{Claim 2:} $U'(N)= U(N)$, 
i.e., we show that $U'(N)$ satisfies (1) and (2).
First observe that for any $u'\in U'(N)$, we have $C(u')\subset S(N)$. 
Indeed, if there exists $z' \in C(u') \setminus S(N)$,
then there exists $z\in N$ such that $z\leqq z'$.
Since $z'<u'$, we get $z<u'$ contradicting condition (i).
\begin{itemize}
 \item[(1)]
  Let $u'\in U'(N)$. Then we have $C(u')\subset S(N)$ 
  and hence $\bigcup_{u'\in U'(N)} C(u')\subset S(N)$.
  From Claim 1 above, 
  we have $$S(N)=\bigcup_{u\in U(N)} C(u)\subset \bigcup_{u'\in U'(N)} C(u'),$$ 
  and thus $\bigcup_{u'\in U'(N)} C(u')=S(N)$, which proves (1).
 \item[(2)] Now let $u^1,u^2\in U'(N)$. 
 Then we have $C(u^1)\subset S(N)$ and $C(u^2)\subset S(N)$.
  If $C(u^1)\subset C(u^2)$, then (ii) would be violated. This proves (2).
\end{itemize}
\end{proof}

\subsection{Related concepts}\label{sub:otherdef}

In computational geometry, 
\citet{KapRubShaVer08} define \emph{maximal empty orthants} with respect 
to a discrete set of points $Q\subset \R^p$
as partially bounded hyperrectangles of the form 
$\prod_{j=1}^p (-\infty, a_j)\subset \R^p$, 
for some $a\in \R^p$ (apex),
which contain no point of $Q$ and are maximal for this property under inclusion.
It is clear that such points $a$ satisfy the conditions 
of Proposition~\ref{def:upperbounds3}
and are therefore local upper bounds for the search region $S(Q)$.

For their generalization of the two-phase method to MOO problems 
with more than two objectives,
\citet{PrzGanEhr10} are interested in characterizing 
the part of the objective 
space where remaining nondominated points have to be searched after phase one.
To this end, they define a concept similar to our search region, 
the \emph{search area}. 
The search area $S'(N)$ is defined as the closure of the complement set of 
$\{z\in\R^p:\exists z' \in N, z' \leqq z\}$, i.e.,
$$S'(N)={\rm cl}\left(\{z\in\R^p:\exists z' \in N, z' \leqq z\}^C\right).$$

$N$ is defined in \citet{PrzGanEhr10} as 
an \emph{upper bound set for the nondominated set} in the sense of \citet{EhrGan07},
and can also be any stable set of feasible points.
Note that we omit from their definition a lower bounding part, 
which is not relevant for our purpose.

In fact, the search area $S'(N)$ corresponds to the closure of the search region $S(N)$
defined according to equation~(\ref{eq:searchregion}) in Definition~\ref{def:upperbounds1}.
This difference 
implies that the search area $S'(N)$ includes $N$ 
and even points of the objective space that are weakly dominated 
by some points of $N$.

\citet{PrzGanEhr10} and \citet{DaeKla14} also describe the search area by a set of 
\emph{corner points} or \emph{upper bounds} which are the same as the \lub{}s we consider in this {\DocType}.
The former rely on a definition for these points which corresponds to Proposition~\ref{def:upperbounds3}.

\subsection{Application for the solution of MOO problems}

The concepts and properties developed in Sections~\ref{sub:the_search_region} and~\ref{sub:explicit}
apply to MOO in general.
For continuous and mixed discrete-continuous problems, they are useful in approaches
aimed at generating discrete representations of the nondominated set.
In the case of discrete problems, such as multi-objective combinatorial optimization (MOCO) problems,
they play an important role in the generation of the nondominated set as well.
In this section, we focus on the latter issue and 
mention two widely applied methods to show how the computation of
local upper bounds can be integrated into an overall solution strategy.

\paragraph{A generic method based on the solution 
of budget constrained programs} 

The representation of the search region as a set of search zones 
makes it possible to derive a simple algorithm 
to enumerate all nondominated points of 
a MOCO problem. 
This can be done by iteratively exploring the search zones
that define the search region and updating the search region whenever new points are found.
The exploration of a search zone $C(u)$ has to determine 
whether $C(u)$ contains feasible points, and if so output one such point.
In order to limit the number of search zones that are considered, 
the exploration routine should return only nondominated points.
Such an exploration can be achieved by solving, for example, 
the following mathematical program associated to a search zone $C(u)$:
$$P(u) : \min \{g(\Obj(x)):x \in X, \Obj(x) < u\}$$
where $g$ is any strongly increasing aggregation function of the $\Obj_j$'s 
(e.g. $g:z \mapsto \sum_{j=1}^p z_j$).
Note that the strict dominance 
in the definition of $P(u)$ 
can be transformed into non-strict inequalities
by slightly decreasing $u$ 
since $X$ is a discrete set, or, when possible, by taking advantage of Remark~\ref{rem:upperbounds2}.
Problem~$P(u)$ can be seen as a variant of the $\varepsilon$-constraint method \citep[see e.g.][]{ChaHai83}
and was proposed in \citet{ChaLemElz86}
using a weighted sum function with positive weights as function $g$.

The generic method is presented in Algorithm~\ref{alg:basic_moco}. 
From property (2) of Definition~\ref{def:upperbounds1}, there is no redundant
constrained program among the programs 
associated to the search zones of the current search region.

\begin{algorithm}[!htbp]
\SetKwInOut{Input}{input}\SetKwInOut{Output}{output}
\SetKw{SuchThat}{such that}

\Input{$X$, $\Obj$, $\BM$}

\Output{$\ND$}

\BlankLine

$N\leftarrow\emptyset$; $U(N)\leftarrow\{\BM\}$

\While{$U(N)\neq \emptyset$} {
	Select $u\in U(N)$ \nllabel{alg:basic_moco:lub}
	
	\If{$P(u)$ is feasible} {
	
		Let $\bz$ be an optimal point of $P(u)$
	
		$N\leftarrow N \cup \{\bz\}$ 
		
		Update $U(N)$
	}
	\Else {
	
		$U(N)\leftarrow U(N)\setminus\{u\}$
	
	}
}
\Return {$N = \ND$}

\caption{Generic method to generate all nondominated points of a MOCO problem
based on the definition of search zones\label{alg:basic_moco}}
\end{algorithm}

Now we count the number of constrained programs that have to be solved in Algorithm~\ref{alg:basic_moco}.
Note that each \lub{} that is considered at Step~\ref{alg:basic_moco:lub}
will either lead to a nondominated point
or be part of the final \ubs{} $U(\ND)$ 
(if the associated $P(u)$ has no feasible solution).
Therefore the number of calls to the exploration routine is exactly 
$|U(\ND)|+|\ND|$.
This, together with the tight upper bound on $|U(\ND)|$ provided in Section~\ref{sec:counting},
amounts to $O(|\ND|^{\lfloor \frac{p}{2} \rfloor})$ solver calls for $p \geq 2$.

Many papers in the literature propose overall strategies,
based on solving budget constrained programs,
that generate the nondominated set:
\citet{ChaLemElz86, LauThiZit06, SylCre08, OzlAzi09, LokKok12, KirSay14, DaeKla14}.
To our knowledge, only approaches specialized to the bi- and tri-objective cases
provide a non-trivial  upper bound on the number of solver calls.
Good upper bounds are known, however, for $p=2$ and $p=3$.
\citet{ChaLemElz86} propose an approach for the case $p=2$, similar to
Algorithm~\ref{alg:basic_moco},
where exactly $2|\ND|+1$ solver calls are
required.
In the more complex case $p=3$, \citet{DaeKla14} suggest
a closely related method and show that
at most $3|\ND|-2$ solver calls are needed in this case.

\paragraph{MOBB and two phase methods}

In multi-objective branch and bound (MOBB), 
a bounding step is performed at each node of a search tree.
Assume we consider the current node whose set of feasible solutions 
is $X'\subset X$.
In general, computing either $X'$ or $\ObjFeas'=\Obj(X')$ would be expensive.
However, given a set of, say $m$ weight vectors 
$\lambda^1,\dots,\lambda^m$ of $\R^p$ 
such that for any $i\in\{1,\dots,m\}$, $\lambda^i \geq \Bzero$,
we may approximate $\ObjFeas'_{\rm nd}$ by computing 
$\alpha_i=\min\{\sum_{j=1}^p \lambda^i_j  z_j : z \in \ObjFeas'\}$,
especially if the single objective version of the underlying problem 
is solvable in polynomial time.
Denoting by $\ObjSub$ the set 
$\{z\in Z: \sum_{j=1}^p \lambda^i_j  z_j \geq \alpha_i,\, i=1,\dots,m\}$,
we have $\ObjFeas' \subset \ObjSub$, thus $\ObjFeas'_{\rm nd} \subset \ObjSub$.
Therefore, if $\ObjSub \cap S(N) = \emptyset$ then the current node can be pruned 
since it cannot yield any new nondominated point.

Consider also the two-phase method, 
and especially the version where a ranking algorithm is used 
to obtain, in phase two, nondominated non-extreme points.
A set of weight vectors 
$\lambda^1,\dots,\lambda^m$ that satisfy the same conditions as above
is obtained in phase one.
At some time during phase two, we are given $m$ values 
$\alpha_1,\dots,\alpha_m\in\R$
such that for each $i\in\{1,\dots,m\}$, all feasible points whose weighted sum value
according to the weight vector $\lambda^i$ is less than or equal to $\alpha_i$
have been computed.
Considering $N$ as the set of all these feasible points, excluding the dominated ones,
we can test whether the set 
$\ObjSub=\{z\in Z : \sum_{j=1}^p \lambda^i_j  z_j \geq \alpha_i,\,i=1,\dots,m\}$
intersects the search region $S(N)$.

 Now we explain how \lub{}s help to determine whether such a polytope $\ObjSub$
 intersects the search region.
 For any $\lambda \in \R^p$ 
 such that $\lambda \geq \Bzero$ and for any $z, z'\in \R^p$ 
 such that $z<z'$, we have $\sum_{j=1}^p \lambda^i_j  z_j < \sum_{j=1}^p \lambda^i_j  z_j'$. 
 Therefore, together with conditions (1a') and (1b') 
 of Proposition~\ref{def:upperbounds2}, we obtain:
 
\begin{multline}
  \ObjSub\cap S(N) = \emptyset\text{ if, for all } u\in U(N),
  \text{ there exists }i\in\{1, \dots, m\} \\
  \text{ such that }
  \sum_{j=1}^p \lambda^i_j  u_j \leq \alpha_i
  \label{eq:RuleR} \tag{Rule ``\R''}
\end{multline}

 If the feasible points are integral, 
 that is we restrict ourselves to integer vectors in both $S(N)$ and $\ObjSub$,
 the above condition for $\ObjSub\cap S(N)=\emptyset$ can be strengthened using 
 Remark~\ref{rem:upperbounds2}.
 In this case, we rely on the following implication: 
 if $z$ and $z'$ are two vectors such that
 $z\leqq z'$, then $\sum_{j=1}^p \lambda^i_j  z_j \leq \sum_{j=1}^p \lambda^i_j  z_j'$.
 Then in this case, we have:

\begin{multline}
  \ObjSub\cap S(N) = \emptyset \text{ if, for all } u \in U(N),
  \text{ there exists } i\in\{1, \dots, m\}\\
  \text{ such that }
  \sum_{j=1}^p \lambda^i_j  u_j < \alpha_i-\sum_{j=1}^p \lambda^i_j 
  \label{eq:RuleZ} \tag{Rule ``\Z''}
\end{multline}
 
 These rules are used by \citet{SouSpa08a} in the context 
 of MOBB to find all nondominated points
 of the bi-objective minimum spanning tree problem. 
 \citet{PrzGanEhr08} also use them in a two-phase method 
 based on the use of a ranking algorithm to find all nondominated points
 of the bi-objective assignment problem. 
 \citet{PrzGanEhr10} consider their application again in a two-phase 
 method not limited to the bi-objective case. 
 The rules they propose for the general multi-objective case
 are related to their definition 
 of the search area we presented in Section~\ref{sub:otherdef}, 
 which implies that the rule for the integral case 
 is a little weaker than \ref{eq:RuleZ}.

\section{Existence and construction of upper bound sets}\label{sec:existence}

The initial search region consists of the whole search interval $ Z$. 
Therefore, it can be described by the following upper bound set:
$$U(\emptyset) = \{\BM\}.$$
Actually, this defines the unique search zone $C(\BM)= Z$, 
which is consistent with Definition~\ref{def:upperbounds1}.

Starting with this in the case $N=\emptyset$, 
a simple incremental algorithm can be formulated 
that iteratively introduces points to the set $N$ 
and updates the \ubs{} $U(N)$ accordingly.
It was first proposed by \citet{PrzGanEhr10} 
with a slight difference in the filtering step
to which we shall return later.
Given a finite and stable set $N\subset Z$, a corresponding \ubs{} $U(N)$, 
and a point $\bz\in Z$ that is nondominated with respect to $N$,
Algorithm~\ref{alg:basic_incr} describes the updating procedure 
to obtain the \ubs{} $U(N\cup\{\bar{z}\})$. 

\begin{algorithm}[!htbp]
\SetKwInOut{Input}{input}\SetKwInOut{Output}{output}
\SetKw{SuchThat}{such that}

\Input{$U(N)$, $\bz$\tcp*{Set of \lub{}s and new point}}

\Output{$U(N\cup \{\bz\})$}

\BlankLine

$A\leftarrow\{u\in U(N):\bz<u\}$
\nllabel{alg:basic_incr:visible} 
\tcp*{Search zones that contain $\bz$}

{$B\leftarrow\{u\in U(N)\setminus A:\bz \leq u\}$}
\tcp*{Search zones whose boundary contains $\bz$}

$P\leftarrow\emptyset$

\For{$u\in A$}{\nllabel{alg:basic_incr:start_projections}

\For{$j\in\onetod$}{

$P\leftarrow P\cup \{(\bz_{j},u_{-j})\}$
\nllabel{alg:basic_incr:projections}
\tcp*{Generate all projections of $\bz$ on the \lub{}s of $A$}
}}

$P\leftarrow\{(\bz_j,u_{-j})\in P:(\bz_j,u_{-j})\not\leq u',\,
\forall u' \in P\cup B \}$
\nllabel{alg:basic_incr:filtering} 
\tcp*{Filter out all {redundant points} of $P$}

$U(N\cup \{\bz\})\leftarrow (U(N)\setminus A) \cup P$
\nllabel{alg:basic_incr:final_step}

\caption{Update procedure of an \ubs{} 
based on redundancy elimination
\label{alg:basic_incr}}
\end{algorithm}

Basically, Algorithm~\ref{alg:basic_incr} updates each search zone $C(u)$ 
in which the new point $\bar{z}$ lies by removing 
from $C(u)$ the part of $ Z$ 
which is dominated by $\bar{z}$ (including $\bar{z}$). 
This is achieved by replacing $C(u)$ by $p$ subzones as done 
in Step~\ref{alg:basic_incr:projections}
of Algorithm \ref{alg:basic_incr}. 
Some of these newly generated subzones may be redundant, 
and are thus removed in Step~\ref{alg:basic_incr:filtering}.
More formally, we state the following result, which justifies Algorithm~\ref{alg:basic_incr}.

\begin{proposition}\label{prop:existence}
Let $N\subset Z$ be a non-empty finite and stable set of points.
Applying Algorithm~\ref{alg:basic_incr} iteratively on the points of $N$, 
starting with
an initial \ubs{} $U(\emptyset)=\{\BM\}$,
returns the correct upper bound set $U(N)$.
\end{proposition}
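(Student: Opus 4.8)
The plan is to argue by induction on the cardinality of $N$. The base case $N=\emptyset$ is already settled in the text: $U(\emptyset)=\{\BM\}$ yields the single zone $C(\BM)=Z$, which satisfies Definition~\ref{def:upperbounds1} trivially. For the inductive step I assume that $U(N)$ is a correct \ubs{} for some finite stable $N$, that $\bz\in Z$ is nondominated with respect to $N$, and I show that the output $U':=(U(N)\setminus A)\cup P$ of Algorithm~\ref{alg:basic_incr} is a correct \ubs{} for $N\cup\{\bz\}$. Throughout I would use the elementary dictionary between zone inclusion and weak dominance, namely $C(a)\subseteq C(b)\Leftrightarrow a\leqq b$ for the (nondegenerate) bounds arising here, so that condition~(2) of Definition~\ref{def:upperbounds1} can be checked via the pairwise dominance tests of condition~(2') in Proposition~\ref{def:upperbounds2}. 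I would then verify the two conditions of Definition~\ref{def:upperbounds1} in turn: coverage first, minimality second.

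For coverage, the key observation is that adding $\bz$ shrinks the region by exactly the newly dominated part, $S(N\cup\{\bz\})=S(N)\setminus\{z\in Z:\bz\leqq z\}$, which is immediate from Definition~\ref{def:sr}. I would split $U(N)$ into the bounds in $A$ (those $u$ with $\bz<u$, equivalently $\bz\in C(u)$) and the rest. For $u\notin A$ there is a coordinate $j$ with $\bz_j\geq u_j$, so every $z\in C(u)$ has $z_j<u_j\leq\bz_j$ and hence $\bz\not\leqq z$; thus $C(u)$ contains no newly dominated point and survives unchanged. For $u\in A$, a point $z\in C(u)$ survives iff $z_j<\bz_j$ for some $j$, which is exactly the condition $z\in C((\bz_j,u_{-j}))$; consequently the surviving part of $C(u)$ equals $\bigcup_{j\in\onetod} C((\bz_j,u_{-j}))$. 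Taking the union over all zones shows that the projections generated in Step~\ref{alg:basic_incr:projections}, together with the untouched bounds of $U(N)\setminus A$, cover precisely $S(N\cup\{\bz\})$, giving condition~(1) for the unfiltered family. Since the filter in Step~\ref{alg:basic_incr:filtering} only discards points whose zones are contained in a retained one, coverage is preserved and condition~(1) holds for $U'$.

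The delicate part, and the main obstacle, is minimality (condition~(2)), and in particular justifying that the filter in Step~\ref{alg:basic_incr:filtering} compares projections only against $P\cup B$ rather than against the whole retained family $(U(N)\setminus A)\cup P$. Here I would prove the pivotal lemma: a projection $a=(\bz_j,u_{-j})$ with $u\in A$ satisfies $a\leqq b$ for some retained \emph{old} bound $b\in U(N)\setminus A$ only if $b\in B$. The argument exploits that $u,b\in U(N)$ are incomparable by condition~(2') of Proposition~\ref{def:upperbounds2}: from $a\leqq b$ one gets $u_k\leq b_k$ for all $k\neq j$, whence incomparability forces $u_j>b_j$; combining $\bz_j\leq b_j<u_j$ with $\bz_k<u_k\leq b_k$ for $k\neq j$ yields $\bz\leqq b$, and the impossibility of $\bz<b$ (which would place $b$ in $A$) pins down $\bz_j=b_j$, so $\bz\leq b$ and indeed $b\in B$. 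A symmetric and easier check shows that no surviving old bound is dominated by a projection (otherwise it would be comparable to $u$), which is precisely why Algorithm~\ref{alg:basic_incr} never needs to delete points of $U(N)\setminus A$. Together these two facts show that the dominance tests restricted to $P\cup B$ detect exactly the redundancies present in the full retained family, so the filtered set $U'$ contains no zone included in another. This establishes condition~(2) and completes the induction.
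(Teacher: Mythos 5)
Your proof is correct and follows essentially the same route as the paper: induction on $|N|$, the decomposition $C(u)\cap S(N\cup\{\bz\})=\bigcup_{j=1}^p C((\bz_j,u_{-j}))$ for $u\in A$ to establish coverage, and, for minimality, the observation that a projection dominated by a retained old bound forces that bound into $B$ (your ``pivotal lemma'' is the spelled-out version of the paper's ``otherwise we have $u'\in B$'' case, using condition (2') for $U(N)$ to derive $\bz_j=b_j$). Your additional explicit checks --- that untouched zones contain no newly dominated points and that no old bound is ever dominated by a projection --- are details the paper leaves implicit, not a different argument.
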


\begin{proof}
 We show that Algorithm~\ref{alg:basic_incr} correctly computes 
 the set $U(N\cup\{\bar{z}\})$, given any finite and stable set $N\subset Z$ 
 of points, 
 the correct upper bound set $U(N)$, and a new point $\bar{z}\in Z$ 
 that is nondominated with respect to $N$. The result then follows by induction.
 
 Considering the new point $\bar{z}$, the search region $S(N\cup\{\bar{z}\})$ 
 must be updated from $S(N)$ by removing all points in $Z$ such that 
 $\bar{z}\leqq z$.

 In Step~\ref{alg:basic_incr:visible} of Algorithm \ref{alg:basic_incr} 
 the search zones $C(u)$, $u\in A$, containing $\bar{z}$ 
 are identified. All other search zones $C(u)$, $u\in U(N)\setminus A$, 
 are not affected by the new point $\bar{z}$ and thus need not be modified.
 
 Thus, we just need to remove the set of points 
 $\{z\in S(N) : \bar{z} \leqq z\}$ from the search zones 
 $C(u)$, $u\in A$, to ensure that condition (1) of Definition~\ref{def:upperbounds1} is
 satisfied. Steps~\ref{alg:basic_incr:start_projections}-\ref{alg:basic_incr:projections}
 are justified by the fact
 that for any $u\in A$ we have
 $$C(u) \setminus \{z\in S(N) : \bar{z} \leqq z\} = \bigcup_{j=1}^p C(\bar{z}_j,u_{-j}).
 $$

 Among the candidate \lub{}s of $P$, 
 as defined after all iterations of Step~\ref{alg:basic_incr:projections}, 
 there may be some \emph{redundant} \lub{}s 
 in the sense
 that they induce search zones that are included 
 in a search zone associated to some (candidate) \lub{}s 
 of $P\cup (U(N) \setminus A)$.
 Let $(\bz_j, u_{-j})\in P$, with $u\in A$, be a redundant \lub{},
 i.e.\ there exists $u'\in P\cup (U(N) \setminus A)$
 such that $\bz \leq (\bz_j, u_{-j}) \leq u'$.
 If $\bz<u'$, then $P$ contains the candidate \lub{} $(\bz_j, u'_{-j})$, 
 otherwise we have $u'\in B$.
 Therefore, Step~\ref{alg:basic_incr:filtering} correctly filters the set $P$,
 which leads to satisfying condition (2) of Definition~\ref{def:upperbounds1}. 
\end{proof}

In \citet{PrzGanEhr10}, the filtering step is formulated
with respect to the set $U(N)$, i.e. 
$$P\leftarrow\{(\bz_j,u_{-j})\in P:
(\bz_j,u_{-j})\not\leq (u'),\, \forall u'\in U(N)\}.$$
This is correct, but involves unnecessary dominance tests 
compared to Algorithm~\ref{alg:basic_incr}, 
since one only needs to filter with respect to $P\cup B$
instead of $U(N)$.

It is even possible to further refine the filtering step 
of Algorithm~\ref{alg:basic_incr}.
To this end, we prove the following proposition.
\begin{proposition}\label{rem:basic}
Let $(\bz_j, u_{-j})$ be a candidate \lub{} in $P$. Then:
\begin{itemize}
 \item[(1)] $(\bz_j, u_{-j})\leq (\bz_k,u'_{-k})$ for some 
 $(\bz_k,u'_{-k})\in P$ with $k\neq j$ 
 implies $(\bz_j, u_{-j})\leq (\bz_j,u'_{-j})$;
 
 \item[(2)] $(\bz_j, u_{-j})\leq u'$ for some $u'\in B$ 
 implies $\bz_j = u_j'$ and $\bz_{-j} < u_{-j}'$.
\end{itemize}
\end{proposition}

\begin{proof} \hfill
\begin{itemize}
 \item[(1)] Since $\bz < u'$, we have $(\bz_k,u'_{-k})\leq u'$, 
 which, together with $(\bz_j, u_{-j})\leq (\bz_k,u'_{-k})$,
 leads to $(\bz_j, u_{-j})\leq u'$, 
 and thus $(\bz_j, u_{-j})\leq (\bz_j,u'_{-j})$.
 \item[(2)] Since $\bz < u$ and $(\bz_j, u_{-j})\leq u'$ 
 we have $\bz_{-j} < u_{-j} \leq u'_{-j}$. Moreover, with $u'\in B$, 
 we obtain $\bz_j=u_j'$.
\end{itemize}
\end{proof}

According to property (1) of Proposition~\ref{rem:basic}, 
the filtering step~\ref{alg:basic_incr:filtering}
of Algorithm~\ref{alg:basic_incr} can be replaced by the following step:
\begin{equation*}
\begin{split}
P\leftarrow\{(\bz_j,u_{-j})\in P:
(\bz_j,u_{-j})\not\leq (\bz_j,u_{-j}'),\\ 
\forall (\bz_j,u_{-j}') \in P
\text{ and }(\bz_j,u_{-j})\not\leq u',\, &\forall u'\in B\}   
  \end{split} 
\end{equation*}
which is equivalent to the following formulation:
$$P\leftarrow\{(\bz_j,u_{-j})\in P:
(\bz_j,u_{-j})\not\leq u',\, \forall u'\in (A\cup B)\setminus\{u\}\}$$

From property (2) of Proposition~\ref{rem:basic},
it is also possible to do fewer dominance tests against the \lub{}s
of $B$.

Overall, Proposition~\ref{rem:basic} shows that it is only required to
perform dominance tests between vectors that differ 
in all but one component.
We present these enhancements in Algorithm~\ref{alg:basic_incr2}, 
where we split the sets $B$ and $P$ into $p$ disjoint sets, respectively
$B_1,\dots,B_p$ and $P_1,\dots,P_p$,
to stress the by-component filtering step.

\begin{algorithm}[!htbp]
\SetKwInOut{Input}{input}\SetKwInOut{Output}{output}
\SetKw{SuchThat}{such that}

\Input{$U(N)$, $\bz$\tcp*{Set of \lub{}s and new point}}

\Output{$U(N\cup \{\bz\})$}

\BlankLine

$A\leftarrow\{u\in U(N):\bz<u\}$
\tcp*{Search zones that contain $\bz$}

\For{$j\in\onetod$}{
	{$B_j\leftarrow\{u\in U(N):\bz_j = u_j\text{ and }\bz_{-j} < u_{-j}\}$}
	
	$P_j\leftarrow\emptyset$
}

\For{$u\in A$}{\nllabel{alg:basic_incr2:start}

\For{$j\in\onetod$}{

$P_j\leftarrow P_j\cup \{(\bz_{j},u_{-j})\}$
\nllabel{alg:basic_incr2:projections}
\tcp*{Generate all projections of $\bz$ on the \lub{}s of $A$}
}}

\For{$j\in\onetod$}{\nllabel{alg:basic_incr2:start_filtering} 
$P_j\leftarrow\{(\bz_j,u_{-j})\in P_j:(\bz_j,u_{-j})\not\leq u',\,
\forall u' \in P_j\cup B_j \}$
\nllabel{alg:basic_incr2:filtering} 
\tcp*{Filter out all {redundant points} of $P$}
}

$U(N\cup \{\bz\})\leftarrow (U(N)\setminus A) \cup \bigcup_{j=1}^p P_j$

\caption{Update procedure of an \ubs{} 
based on redundancy elimination
with an enhanced filtering step
\label{alg:basic_incr2}}
\end{algorithm}

While Algorithm~\ref{alg:basic_incr2} allows the correct computation 
of \ubs{}s, 
it requires the iterative filtering 
for a possibly large number of candidate \lub{}s,
which may be computationally expensive.
In the next section, we establish structural properties of \lub{}s 
which yield necessary and sufficient conditions for a candidate \lub{}
to become actually a (non-redundant) \lub{}.
Then a new approach to the incremental computation of an \ubs{},
which avoids the filtering step, is derived.

\section{Properties of local upper bounds and their efficient computation}
\label{sec:properties}

In this section, we study some theoretical properties of \lub{}s 
that yield another approach which, in comparison to the algorithms
presented in Section~\ref{sec:existence}, 
avoids the filtering step 
(namely Steps~\ref{alg:basic_incr2:start_filtering} and \ref{alg:basic_incr2:filtering} 
in Algorithm~\ref{alg:basic_incr2}).

The properties are first presented under a simplifying assumption that 
no two distinct points,
among the points of $Z$ to be considered,
share the same value in any dimension.
This assumption, denoted ``\GP{}'' in the remainder, 
corresponds to what is referred to as a \emph{general position} assumption 
in computational geometry.
It is, however, not realistic for many instances of MOCO problems, 
that is why we extend the properties under the general case
according to which identical component values among distinct points are allowed.

We first illustrate the properties on small examples
(Section~\ref{sec:properties_ex}). Then we detail the properties and derive 
the new approach (Section~\ref{sec:incremental_gp}).

\subsection{Introductory examples and geometric interpretation}
\label{sec:properties_ex}

In this section, we give a geometric intuition 
to the properties that are detailed in the next sections
through two example instances in the tri-objective case.
First we present an example instance in the \GP{} case (Example~\ref{ex:gp}).
Then we discuss the consequences of feasible points having identical component values 
(Example~\ref{ex:ngp}).

\begin{example}[Under \GP{}]\label{ex:gp}
We consider a three-dimensional simple instance 
of our problem which consists of 
two feasible points: $z^1 = (3,5,7)$ and $z^2 = (6,2,4)$. 
Let us apply the incremental algorithm 
presented in Section~\ref{sec:existence}
first on $U(\emptyset)=\{\BM\}$ 
and $\bz=z^1$ 
then on $U(\{z^1\})$ and $\bz=z^2$.
At the first iteration, $z^1$ yields three \lub{}s,
namely $u^1=(3,M,M)$, $u^2=(M,5,M)$,
and $u^3=(M,M,7)$ so that $U(\{z^1\})=\{u^1, u^2, u^3\}$.
Then at the second iteration we consider the three projections of $z^2$ 
on the \lub{}s whose associated search zones contain $z^2$ 
which are $u^2$ and $u^3$. 
We get $u^{21} = (6,5,M)$, $u^{22} = (M,2,M)$, and $u^{23} = (M,5,4)$ for $u^2$,
and $u^{31} = (6,M,7)$, $u^{32} = (M,2,7)$, and $u^{33} = (M,M,4)$ for $u^3$.
Projections $u^{23}$ and $u^{32}$ being redundant since
$u^{33}\geq u^{23}$ and $u^{22} \geq u^{32}$, we have 
$U(\{z^1, z^2\}) = \{u^1,u^{21}, u^{22}, u^{31}, u^{33}\}$.

We represent the situation in Figure~\ref{fig:3d_gp1}. 
The feasible points $z^1,z^2$ are depicted 
together with their Pareto dominance cones 
$\{z\in Z: z^i \leqq z\}$, $i\in\{1, 2\}$, in gray
as well as the \lub{}s. 
The scene is represented in perspective from point $\Bm$ to point $\BM$
so that the search zones go towards us.

\begin{figure}[!htbp]
\begin{centering}
\subfloat[$N=\{z^1\}$\label{fig:3d_gp1_1}]{
\includegraphics[width=.48\textwidth]{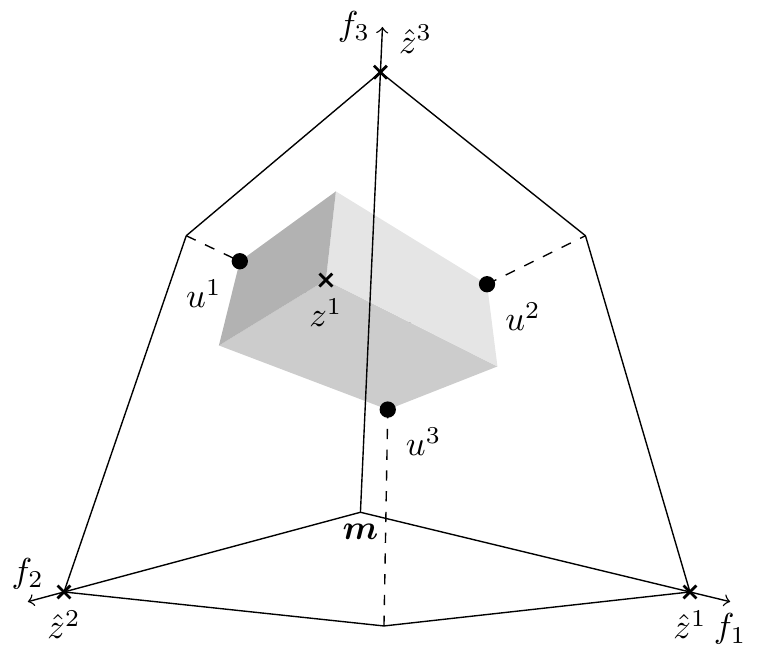}}
\subfloat[$N=\{z^1,z^2\}$\label{fig:3d_gp1_2}]{
\includegraphics[width=.48\textwidth]{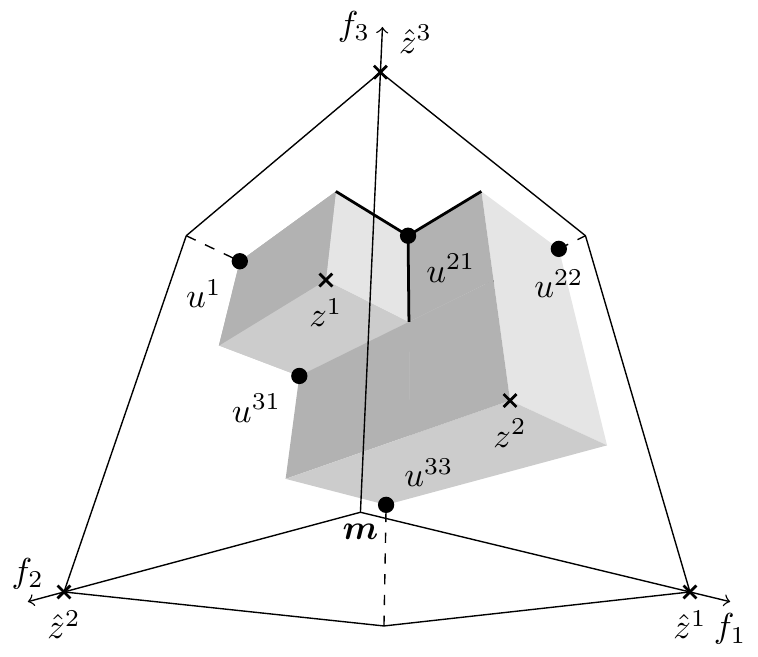}}
\par\end{centering}
\caption{A three-dimensional example problem with points under \GP{}
\label{fig:3d_gp1}}
\end{figure}

Now we look at a particular \lub{}, say $u^{21}=(z^2_1, u^2_{-1})$. 
Consider any point $\bz$ that belongs to the search zone defined by $u^{21}$.
The $j$th projection of $\bz$ 
on $u^{21}$ amounts to sliding $u^{21}$ 
along the half-line $[u^{21},(\Bm_j,u^{21}_{-j}))$.
From Figure~\ref{fig:3d_gp1}, we can see that 
if a projection of $\bz$ on $u^{21}$
lies outside any of the three black line segments that start from $u^{21}$,
then it will be redundant since it belongs to the closure of another search zone.
We can see that these line segments are edges of the union of dominance cones 
associated to the points of $N$, plus three dummy points
$\dummy^1=(M,m,m)$, $\dummy^2=(m,M,m)$ and $\dummy^3=(m,m,M)$.
With these dummy points, even \lub{}s located on the boundary
of $\hZ$ lie at the intersection of three dominance cones.
We can now avoid the filtering step 
(Steps~\ref{alg:basic_incr2:start_filtering} and \ref{alg:basic_incr2:filtering}) 
of Algorithm~\ref{alg:basic_incr2} if,
for each \lub{} $u$, the $p$ edges of the union 
which are incident to $u$ are known.
In the rest of this section, we consider facets of the union of
the dominance cones associated to the points of $N$ 
and to the dummy points $\dummy^1, \dummy^2, \dummy^3$.

We can see that the facets incident to $u^{21}$ 
are composed of two facets incident to $u^2$ that are shrunk 
after the first projection of $z^2$ and one facet 
which is a subset of a facet of the dominance cone associated to $z^2$.
So, in order to compute the edges incident to $u^{21}$, 
we only have to keep track of the three points 
that lower bound the facets, namely $z^2$, $z^1$, and $\dummy^3$. 
This holds because under \GP{}, a facet is defined by a \lub{} 
and a single point of $N$.
\end{example}

\begin{example}\label{ex:ngp}

\begin{figure}[!htbp]
\begin{centering}
\includegraphics{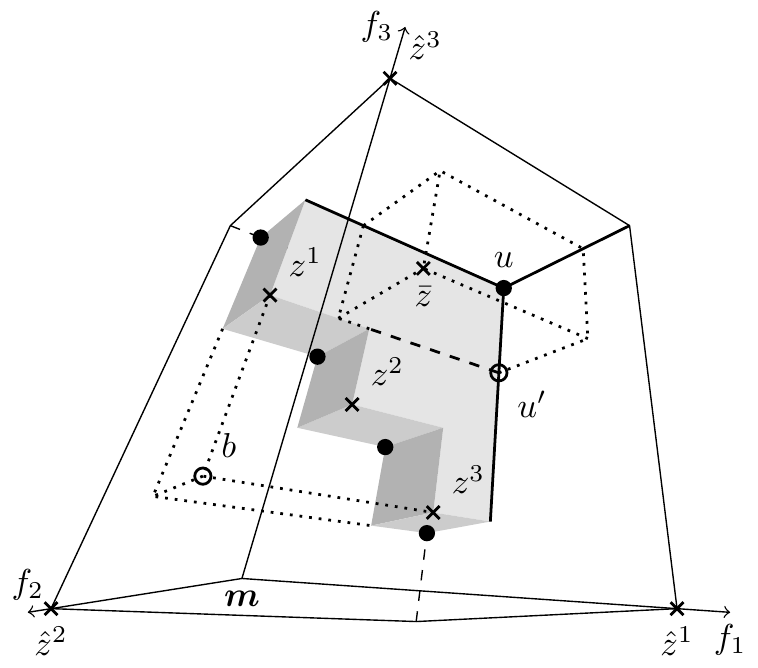}
\par\end{centering}
\caption{A three-dimensional instance with feasible points having the same value on component~2
\label{fig:3d_ngp1}}
\end{figure}

Consider the three-dimensional instance represented in Figure~\ref{fig:3d_ngp1}
with three feasible points $z^1=(2,7,7)$, $z^2=(5,7,5)$ and $z^3=(8,7,3)$,
which all share the same value on the second coordinate.
We look again at facets of the union of all dominance cones associated to the points of $N$.
The \lub{} $u=(M,7,M)$ is defined by $\dummy^1$ on component 1, $\dummy^3$ on component 3, 
and $z^1$, $z^2$, and $z^3$ on component 2.
We consider the facet of the union incident to $u$ and orthogonal to the $\Obj_2$-axis. 
Similarly to the \GP{} case, we may want to represent this facet by $u$ 
and a single point defining a lower bound on the $\Obj_1$ and $\Obj_3$ values.
Since this facet is incident to three feasible points
we could define $b=(z^1_1, z^1_2 = z^3_2, z^3_3) = (2,7,3)$ (see again Figure~\ref{fig:3d_ngp1}).

However, this information may not be sufficient to avoid future redundancies.
Consider for example the point $\bz = (4,3,7)$ 
as depicted in Figure~\ref{fig:3d_ngp1}
together with its Pareto dominance cone (in dotted lines).
It satisfies $z^1_1<\bz_1<z^2_1$, $\bz_2 < z^1_2=z^2_2$ and $\bz_3 = z^1_3$.
$\bz$ defines among others the \lub{} $u'=(\dummy^1_1,z^2_2, \bz_3) = (M,7,7)$.
Unfortunately, one of the edges incident to $u'$ represented 
as a dashed line in the figure is limited by an intermediate feasible point, namely $z^2$.
Therefore, it will be necessary in the general case to keep track 
of all feasible and dummy points that belong to a facet incident to a \lub{}.
This is what the sets $Z^j(\cdot)$ are aimed at in Section~\ref{sec:incremental_gp}.
\end{example}

\subsection{Theoretical properties of local upper bounds and a new incremental approach}
\label{sec:incremental_gp}

According to Step~\ref{alg:basic_incr2:projections} 
of Algorithm~\ref{alg:basic_incr2}, 
all components of a \lub{} $u$ result from previously generated upper bounds 
for $p-1$ components 
and, for the remaining component, from the currently added point $\bar{z}$.
The initial \lub{} $\BM$, however, is not defined from any point of $Z$.
In order to make no particular case of the component values 
inherited from $\BM$,
we extend any stable set of points from $Z$ with the dummy points we introduced
in the previous section.
Namely, we define the \emph{extension} of $N$ as the set 
$\hN=N \cup \{\dummy^j,\,j=1,\dots,p\}$, where
$$\dummy^j = (\BM_j,\Bm_{-j}),\,j\in\onetod$$
It is not hard to see that $U(\{\dummy^j,\,j=1,\dots,p\})=\{\BM\}$, i.e.
the dummy points yield the initial search zone,
which implies that for any finite and stable set $N$ of points from $Z$,
we have $U(\hN)=U(N)$.

Using dummy points, we now have that any component value 
of a local upper bound is defined by a point of $\hN$.

Observe that a dummy point $\dummy^j$ can only define the $j$th component
of any local upper bound, which is $M$. Indeed since no point from $Z$ is lower
than or equal to $m$ on any component, 
$m$ cannot be a component value of a local upper bound.
Therefore, and since $M$ is unique in the component values of a dummy point,
$\dummy^j$ is the only dummy point that can define component $j$.

The following proposition gives a useful property 
of those points that define each component of a \lub{}.

\begin{proposition}\label{lemma:ab}
 For any \lub{} $u\in U(N)$ and $j \in \onetod$, 
 there exists $z\in \hN$ such that $z_j=u_j$ and $z_{-j}<u_{-j}$.
\end{proposition}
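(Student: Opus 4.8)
The plan is to apply the characterization of \lub{}s in Proposition~\ref{def:upperbounds3} directly to the set $N$, reading off its two conditions for the given $u\in U(N)$: (i) no point of $N$ strictly dominates $u$, and (ii) every $\bu\in\hZ$ with $\bu\ge u$ is strictly dominated by some point of $N$, i.e.\ there is $\bz\in N$ with $\bz<\bu$. Fixing $u$ and a coordinate $j\in\onetod$, I would split according to whether $u_j<M$ or $u_j=M$, since the perturbation argument used in the first case degenerates in the second.

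In the generic case $u_j<M$, I would first collect the candidate witnesses $T=\{\bz\in N : \bz_{-j}<u_{-j}\}$. Condition (i) immediately forces $\bz_j\ge u_j$ for every $\bz\in T$, for otherwise $\bz<u$ would strictly dominate $u$. It then remains to produce a point of $T$ whose $j$th coordinate is exactly $u_j$. For this I would argue that $\min_{\bz\in T}\bz_j=u_j$: any strictly smaller value is excluded by the previous sentence, while a strictly larger minimum — including the case $T=\emptyset$ — is excluded by applying (ii) to $\bu=(v,u_{-j})$ for a value $v$ with $u_j<v\le M$ chosen below that minimum (such $v$ exists precisely because $u_j<M$), which yields a point of $N$ lying in $T$ with $j$th coordinate below $v$, a contradiction. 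Since $N$, and hence $T$, is finite, this minimum is attained, and any minimizer $z$ satisfies $z_j=u_j$ and $z_{-j}<u_{-j}$, as required; note that $z\in N\subset\hN$.

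In the boundary case $u_j=M$ the interval $(u_j,M]$ is empty and condition (ii) yields nothing, so I would instead argue that the witness must be a dummy point. Every point of $N$ lies in the open box $Z=(m,M)^p$ and therefore has $j$th coordinate strictly below $M=u_j$; hence no point of $N$ can serve, and the only element of $\hN$ with $j$th coordinate $M$ is $\dummy^j=(\BM_j,\Bm_{-j})$. This point satisfies $\dummy^j_j=M=u_j$, and $\dummy^j_{-j}=\Bm_{-j}<u_{-j}$ because $m$ is never a coordinate of a \lub{} (as observed just before the statement); thus $z=\dummy^j$ works.

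The step I expect to be the main obstacle is pinning the $j$th coordinate of the witness down to exactly $u_j$ in the case $u_j<M$. A single application of condition (ii) to $(M,u_{-j})$ only delivers a point with $j$th coordinate somewhere in $[u_j,M)$ and cannot rule out intermediate values; it is the interplay of the finiteness of $N$ (so that the infimum of admissible $j$th coordinates over $T$ is attained) with condition (i) (so that no admissible coordinate drops below $u_j$) that forces equality. Everything else amounts to bookkeeping with the coordinate projections and the definition of the dummy points.
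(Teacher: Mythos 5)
Your proof is correct and follows essentially the same route as the paper's: a case split on $u_j=M$ (handled by the dummy point $\dummy^j$) versus $u_j<M$, where a small upward perturbation of the $j$th coordinate---justified by the finiteness of $N$---combined with conditions (i) and (ii) of Proposition~\ref{def:upperbounds3} pins the witness's $j$th coordinate to exactly $u_j$. Your phrasing via the attained minimum over $T$ is just a repackaging of the paper's choice of $\varepsilon$ avoiding all $j$th coordinates of $\hN$ in $(u_j,u_j+\varepsilon)$.
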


\begin{proof}
 If $u_j=M$, then the dummy point $\dummy^j\in \hN$ satisfies the required conditions.
 
 Otherwise and since $\hN$ is a finite set, there exists an $\varepsilon>0$
 sufficiently small such that no point of $\hN$ 
 has its $j$th component value in the interval $(u_j,u_j+\varepsilon)$.
 Let $u'=(u_j +\varepsilon, u_{-j})$. 
 According to Proposition~\ref{def:upperbounds3}, since $u'\in\hZ$
 and $u'\geq u$, there exists a $z\in N$ such that
 (i)  $z<u'$ and
 (ii) $z\not<u$.
 It follows from (i) that we have $z_{-j}<u_{-j}$,
 which imposes $z_j\geq u_j$ from (ii).
 From the choice of $\varepsilon$, we therefore have $z_j=u_j$.
\end{proof}

In the following we define two notations for those points that define local upper
bounds, for the general case and for the \GP{} case, respectively.

\begin{definition}
 For any local upper bound $u\in U(N)$, we denote by 
 $Z^j(u)=\{z\in\hN:z_j=u_j \text{ and } z_{-j}<u_{-j}\}$
 the set of defining points of $u$ for component~$j$, $j\in\onetod$.
 
 In the \GP{} case, the unique defining point of $u$ for component $j$ 
 is denoted $z^j(u)$.
\end{definition}

Using Proposition~\ref{lemma:ab},
we can now precisely characterize the projections 
which are kept in the set $P$ 
after the filtering step of Algorithm~\ref{alg:basic_incr2}.
We first consider the \GP{} case.

\begin{theorem}[\GPl{}]\label{th:incremental}
 Let $\bz$ be a point of $Z$ that is nondominated with respect to $N$ 
 and such that the points in $N\cup\{\bz\}$ satisfy \GP{}. 
 Consider a \lub{} $u\in U(N)$ such that $\bz < u$. 
 Let $z_j^{\max}(u)=\max_{k\neq j} \{z_j^{k}(u)\}$.
 
 Then, for any $j\in\{1,\dots,p\}$, $({\bz}_j,u_{-j})$ is a \lub{} of $U({N}\cup \{\bz\})$ 
 if and only if  $\bz_j>z_j^{\max}(u)$.
\end{theorem}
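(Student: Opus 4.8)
The plan is to check directly, for each $j$, whether the candidate $v:=(\bz_j,u_{-j})$ satisfies the two characterizing conditions of Proposition~\ref{def:upperbounds3} for the augmented set $N':=N\cup\{\bz\}$. It is convenient to first record the following reformulation, combining Proposition~\ref{def:upperbounds3} with Proposition~\ref{lemma:ab}: a point $v\in\hZ$ belongs to $U(N')$ if and only if (a) no point of $N'$ strictly dominates $v$, and (b) for every component $k\in\onetod$ there is a point $z\in\hN'$ with $z_k=v_k$ and $z_{-k}<v_{-k}$. The forward implication is immediate from condition (i) of Proposition~\ref{def:upperbounds3} and from Proposition~\ref{lemma:ab}; for the converse I would verify the maximality condition (ii) of Proposition~\ref{def:upperbounds3}: given $\bu\geq v$ with $\bu\neq v$, pick a component $k$ with $\bu_k>v_k$; then $v_k<M$, so the defining point supplied by (b) cannot be a dummy point and hence lies in $N'$, and it strictly dominates $\bu$. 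This last observation---that a dummy point $\dummy^k$, having $k$th coordinate $M$, can never strictly dominate a point of $\hZ$---is the only delicate point, and it is exactly what makes the reformulation valid.

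With this characterization in hand the two directions are short. Condition (a) holds automatically for $v=(\bz_j,u_{-j})$: a point $z\in N$ with $z<v$ would satisfy $z_{-j}<u_{-j}$ and $z_j<\bz_j\leq u_j$, hence $z<u$, contradicting that $u\in U(N)$ is not strictly dominated; and $\bz\not<v$ since $v_j=\bz_j$. For condition (b), component $j$ is always covered by $\bz$ itself, because $\bz_j=v_j$ and $\bz_{-j}<u_{-j}=v_{-j}$ (as $\bz<u$). It remains to analyse the components $k\neq j$.

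Here the \GP{} hypothesis is decisive. A defining point of $v$ for a component $k\neq j$ must have its $k$th coordinate equal to $v_k=u_k$; under \GP{} applied to $N'$, the only point of $\hN'$ with $k$th coordinate $u_k$ is the original defining point $z^k(u)$ of $u$ (a new coincidence $\bz_k=u_k$ is excluded by \GP{}, since it would force $\bz=z^k(u)$). Thus $z^k(u)$ is the unique candidate, and it qualifies as a defining point of $v$ for component $k$ precisely when $z^k(u)_{-k}<v_{-k}$; the inequalities in the coordinates $l\neq j,k$ already hold because $z^k(u)_{-k}<u_{-k}$, so the only genuinely new requirement is $z^k(u)_j<v_j=\bz_j$, i.e. $z_j^k(u)<\bz_j$.

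Combining the two directions, $v=(\bz_j,u_{-j})$ satisfies (b) for every component if and only if $z_j^k(u)<\bz_j$ holds for all $k\neq j$, that is $\bz_j>\max_{k\neq j}z_j^k(u)=z_j^{\max}(u)$; together with the fact that (a) always holds, this yields the claimed equivalence. The main obstacle I anticipate is not the coordinate bookkeeping but justifying the reformulation (a)--(b) cleanly, in particular the handling of the dummy points in the maximality step; once that is settled, the \GP{} uniqueness of defining points converts the filtering condition into the stated single-coordinate comparison.
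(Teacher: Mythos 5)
Your proof is correct, and it takes a partly different route from the paper's. The forward direction is essentially the paper's argument: both identify, via \GP{} and Proposition~\ref{lemma:ab}, the unique point of $\hat{N}\cup\{\bz\}$ that can define component $k\neq j$ of the candidate $(\bz_j,u_{-j})$ as the old defining point $z^k(u)$, and read off $z^k_j(u)<\bz_j$. The backward direction, however, genuinely differs. The paper argues by contradiction through Proposition~\ref{rem:basic}: if the candidate is not a \lub{}, it must be dominated by another projection $(\bz_j,u'_{-j})$ with $u'\in U(N)$, and the defining point $z^k(u)$ for a component where $u_k<u'_k$ yields the contradiction; this leans on the candidate-generation structure of Algorithm~\ref{alg:basic_incr} (namely that every new \lub{} arises as a projection, and a redundant projection is absorbed by one sharing its $j$th component). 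You instead verify membership in $U(N\cup\{\bz\})$ intrinsically, by packaging Proposition~\ref{def:upperbounds3} and Proposition~\ref{lemma:ab} into the single characterization ``(a) not strictly dominated by $N\cup\{\bz\}$ and (b) every component has a defining point in the extended set,'' and then checking (a) and (b) directly. The one delicate step you rightly flag---that in the maximality argument the defining point for a component $k$ with $\bu_k>v_k$ cannot be a dummy point, since $\dummy^k$ would force $v_k=M$ and $\dummy^l$ ($l\neq k$) would force $v_l>M$---is handled correctly. Your route is more self-contained (it never invokes Proposition~\ref{rem:basic} or the correctness of the filtering step) and makes both directions of the equivalence fall out of one lemma, at the price of having to prove that reformulation; the paper's route reuses machinery it has already set up for the redundancy-elimination algorithm.
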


\begin{proof} Let $u\in U(N)$ and $\bz\in Z\setminus N$ be a point 
not dominated by any point of $N$ such that $\bz<u$.
\begin{itemize}
 \item[$(\Rightarrow)$] 
  Suppose that $\bu=(\bz_j,u_{-j})$ is a \lub{} in $U(N\cup\{\bz\})$ 
  and let $z_j^{\max}(u)=z_j^{k}(u)$ 
  for some point $z^{k}(u)\in \hN$, such that $z^{k}_k(u)=u_k$, $k\neq j$.
  Therefore, $z^{k}_k(u)=\bu_k$ and, from \GP{},
  no other point of $\hN$ equals  $\bu_k$
  on its $k$th component.
  Thus from Proposition~\ref{lemma:ab}, we have $z^{k}_{-k}(u) < \bu_{-k}$,
  which implies $z^k_j(u)=z_j^{\max}(u) < \bu_j = \bz_j$.

 \item[$(\Leftarrow)$] 
  Assume that for a given $j\in \onetod$, ${\bz}_j > z^{\max}_j(u)$.
  Suppose, to the contrary, that $(\bz_j,u_{-j})$ 
  is not a \lub{} for ${N}\cup\{\bz\}$, 
  that is,
  it dominates a \lub{} of $U(N\cup\{\bz\})$.
	Hence from Proposition~\ref{rem:basic}, there exists $u'\in U(N)$ 
	such that $(\bz_j,u_{-j})\leq(\bz_j,u'_{-j})$
	(note that in the \GP{} case, the set $B$ defined in Algorithm~\ref{alg:basic_incr} is empty).
  Then, we have $u_{-j} \leq u'_{-j}$, which implies $u_j>u'_j$
  and $u_k<u'_k$ for some $k\neq j$. 
  Let $z^{k}(u)\in {\hN}$ be the point 
  that defines the $k$th component $u_k$ of $u$. 
  From Proposition~\ref{lemma:ab}, we have $z^{k}_{-k}(u)<u_{-k}$. 
  Thus, since $k\neq j$, we have $z^{k}_{-j}(u)<u'_{-j}$ but since $u'$ is a \lub{}, 
  we must have $z^{k}_{j}(u)\geq u'_{j}$ (otherwise $z^{k}(u)<u'$). 
  Hence, $z_j^{\max}(u) \geq z^{k}_{j}(u) \geq u'_j$. 
  Since we have both $\bz_j>z_j^{\max}(u)$ and $\bz_j<u'_j$, 
  we obtain a contradiction:
  $\bz_j < u'_j \leq z^{\max}_j(u) < \bz_j$.
 \end{itemize}
\end{proof}

Let us illustrate this theorem on the first example instance of Section~\ref{sec:properties_ex}.

\setcounter{example}{1}
\begin{example}[continued] 
Consider the situation in Figure~\ref{fig:3d_gp1_1} with $N=\{z^1\}$,
where $z^1=(3,5,7)$.
The points that define the \lub{}s of $U(N)$, namely $u^1=(3,M,M)$, $u^2=(M,5,M)$, 
and $u^3=(M,M,7)$, are:
$$\begin{array}{lll}
	z^1(u^1) = z^1 & z^2(u^1) = \dummy^2 & z^3(u^1) = \dummy^3 \\
	z^1(u^2) = \dummy^1 & z^2(u^2) = z^1 & z^3(u^2) = \dummy^3 \\
	z^1(u^3) = \dummy^1 & z^2(u^3) = \dummy^2 & z^3(u^3) = z^1 \\
  \end{array}
$$
and $z^{\max}(u^1)=(m,5,7)$, $z^{\max}(u^2)=(3,m,7)$, 
and $z^{\max}(u^3)=(3,5,m)$.

The point $z^2=(6,2,4)$ strictly dominates $u^2$ and $u^3$ and we have:
$$\begin{array}{lll}
	z^2_1 > z^{\max}_1(u^2) & z^2_2 > z^{\max}_2(u^2) & z^2_3 \leq z^{\max}_3(u^2) \\
	z^2_1 > z^{\max}_1(u^3) & z^2_2 \leq z^{\max}_2(u^3) & z^2_3 > z^{\max}_3(u^3) \\
  \end{array}
$$
thus we obtain again the four new \lub{}s $u^{21}=(z^2_1,u^2_{-1})$,
$u^{22}=(z^2_2,u^2_{-2})$,
$u^{31}=(z^2_1,u^3_{-1})$, and
$u^{33}=(z^2_3,u^3_{-3})$.
\end{example}

According to Theorem~\ref{th:incremental}, 
we can avoid the filtering step of Algorithm~\ref{alg:basic_incr2} 
if we keep track of the $p$ points that define each \lub{} 
and only generate the projections of $\bz$ 
that satisfy the conditions of Theorem \ref{th:incremental}. 
The corresponding algorithm is detailed in Algorithm~\ref{alg:enh_incr}.
 
\begin{algorithm}[!htbp]
\SetKwInOut{Input}{input}\SetKwInOut{Output}{output}
\SetKw{SuchThat}{such that}

\Input{$U(N)$ together with $z^j(u),\,\forall j\in\onetod,u\in U(N)$
\tcp*{Set of local upper bounds and associated defining points}}

\Input{$\bz$\tcp*{New point}}

\Output{$U(N\cup \{\bz\})$}

\BlankLine

$A\leftarrow\{u\in U(N):\bz<u\}$ \tcp*{Search zones that contain $\bz$}
$P\leftarrow\emptyset$

\For{$u\in A$}{\nllabel{alg:enh_incr:first_step}

	\For{$j\in\onetod$}{
		$z_j^{\max}(u) \leftarrow \max_{k\neq j} \{z_j^{k}\}$
		\nllabel{alg:enh_incr:zmax}
		
		\tcp{Check for the condition of Theorem~\ref{th:incremental}}
		\If{$\bz_j>z^{\max}_j(u)$}
		{\nllabel{alg:enh_incr:begin}
			\tcp{Let $u^j = (\bz_{j},u_{-j})$}
			
			$P\leftarrow P \cup \{u^j\}$
			
			$z^j(u^j) \leftarrow \bz$\nllabel{alg:enh_incr:keep_z}
			
			\For {$k \in \onetod\setminus\{j\}$} {
				$z^k(u^j)\leftarrow z^k(u)$\nllabel{alg:enh_incr:final_step}
			}
		}
	}
}

$U(N\cup \{\bz\})\leftarrow (U(N)\setminus A) \cup P$

\caption{Update procedure of an \ubs{} based on the avoidance of redundancies:
\emph{\GP{} case}\label{alg:enh_incr}}
\end{algorithm}

Note that each component of the vector $z^{\max}(u)$ for a given \lub{} $u$
will be used at most once in all iterations of Algorithm~\ref{alg:enh_incr}. 
That is why it is computed only before its use, 
namely at Step~\ref{alg:enh_incr:zmax}.
Moreover, this vector is not sufficient to compute the vector 
$z^{\max}(u^j)$ associated to a \lub{} $u^j$ defined from $u$.
Indeed, as the following example shows, it is required to keep track 
of all points that define the component values of $u^j$,
as is done in Steps~\ref{alg:enh_incr:keep_z}-\ref{alg:enh_incr:final_step}.

\setcounter{example}{1}
\begin{example}[continued] 
Consider a new point $z^3 = (4,4,2)$ and the \lub{}
$u^{21} = (6,5,M)$ with $z^{\max}(u^{21}) = (3,2,7)$, 
stemming from 
$z^1(u^{21}) = z^2 = (6,2,4)$,
$z^2(u^{21}) = z^1 = (3,5,7)$, and
$z^3(u^{21}) = \dummy^3 = (m,m,M)$.

We have $z^3 < u^{21}$ and $z^3_2 \geq z^{\max}_2(u^{21})$, 
thus $u^{212} = (z^3_2, u^{21}_{-2})$ is a \lub{} of $U(\{z^1, z^2, z^3\})$.
Since 
$z^1(u^{212}) = z^1(u^{21})$, 
$z^2(u^{212}) = z^3$, and
$z^3(u^{212}) = z^3(u^{21})$,
we have $z^{\max}(u^{212}) = (4,2,4)$. 
As we can see, the last component value of $z^{\max}(u^{212})$,
which comes from $z^1(u^{212}) = z^2$,
cannot be obtained from $z^{\max}(u^{21})$ or $z^3$.
\end{example}

In the general case,
Theorem~\ref{th:incremental} is modified as follows: 

\begin{theorem}\label{th:incremental_ngp}
 Let ${N}$ be a finite and stable set of points of $ Z$, 
 and let $\bz$ be a point of $ Z$ that is nondominated with respect to $N$. 
 Consider a \lub{} $u\in U({N})$ such that $\bz < u$. 
 Let $z_j^{\max}(u)=\max_{k\neq j} \min \{z_j : z \in Z^k (u)\}$.
 
 Then, for any $j\in\{1,\dots,p\}$, $({\bz}_j,u_{-j})$ is a \lub{} of $U({N}\cup \{\bz\})$ 
 if and only if $\bz_j>z_j^{\max}(u)$.
\end{theorem}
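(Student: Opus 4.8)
The plan is to follow the two-implication structure of the proof of Theorem~\ref{th:incremental}, but to replace the single defining point $z^j(u)$ by the sets $Z^k(u)$ and to account for the fact that the set $B$ from Algorithm~\ref{alg:basic_incr2} may now be non-empty. The reading of $z_j^{\max}(u)=\max_{k\neq j}\min\{z_j:z\in Z^k(u)\}$ that drives everything is that the inner minimum records whether \emph{some} defining point of $u$ for component $k$ is small on coordinate $j$, while the outer maximum ranges over the components $k\neq j$ pinning down $u$. Throughout, I would rely on Proposition~\ref{prop:existence}: since $\bz<u$ gives $u\in A$, the vector $(\bz_j,u_{-j})$ is one of the projections generated by Algorithm~\ref{alg:basic_incr2}, and it is a \lub{} of $U(N\cup\{\bz\})$ precisely when it survives filtering, i.e. when $(\bz_j,u_{-j})\not\leq u'$ for all $u'\in(A\cup B)\setminus\{u\}$.

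For the forward implication I would set $\bu=(\bz_j,u_{-j})$ and apply Proposition~\ref{lemma:ab} to $\bu$ separately for each $k\neq j$; since $\bu_k=u_k$, this produces a point $z^{(k)}$ in the extension of $N\cup\{\bz\}$ with $z^{(k)}_k=u_k$ and $z^{(k)}_{-k}<\bu_{-k}$. The first step is to observe $z^{(k)}\neq\bz$ (because $z^{(k)}_k=u_k>\bz_k$ as $\bz<u$), so $z^{(k)}\in\hN$; a componentwise comparison with $u$ then shows $z^{(k)}\in Z^k(u)$, while $z^{(k)}_{-k}<\bu_{-k}$ gives $z^{(k)}_j<\bu_j=\bz_j$. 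Hence $\min\{z_j:z\in Z^k(u)\}<\bz_j$ for every $k\neq j$, and taking the maximum over the finitely many such $k$ yields $z_j^{\max}(u)<\bz_j$.

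For the converse I would argue by contradiction, assuming $\bz_j>z_j^{\max}(u)$ yet that $(\bz_j,u_{-j})$ is not a \lub{}. By the filtering criterion above there is then $u'\in(A\cup B)\setminus\{u\}$ with $(\bz_j,u_{-j})\leq u'$, whence $u_{-j}\leqq u'_{-j}$; since $u$ and $u'$ are distinct incomparable \lub{}s (condition (2') of Proposition~\ref{def:upperbounds2}), $u_{-j}\neq u'_{-j}$ and so there is a component $k\neq j$ with $u_k<u'_k$. The key step is to prove that \emph{every} $z\in Z^k(u)$ satisfies $z_j\geq u'_j$: otherwise $z<u'$ would hold (indeed $z_k=u_k<u'_k$, $z_\ell<u_\ell\leq u'_\ell$ for $\ell\neq j,k$, and $z_j<u'_j$ by assumption), contradicting that $u'$ is a \lub{} through condition (i) of Proposition~\ref{def:upperbounds3}. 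This gives $z_j^{\max}(u)\geq\min\{z_j:z\in Z^k(u)\}\geq u'_j$. To close, I would distinguish $u'\in A$, which forces $\bz_j<u'_j$ since $\bz<u'$, from $u'\in B$, for which Proposition~\ref{rem:basic}(2) gives $\bz_j=u'_j$; in both cases $z_j^{\max}(u)\geq u'_j\geq\bz_j$ contradicts the standing assumption $\bz_j>z_j^{\max}(u)$.

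I expect the main obstacle to be exactly the interplay between the minimum and maximum in $z_j^{\max}(u)$ together with the new case $u'\in B$, both absent from the \GP{} setting of Theorem~\ref{th:incremental}. The forward direction needs an \emph{existential} defining point (one $z\in Z^k(u)$ small on coordinate $j$), supplied by Proposition~\ref{lemma:ab}, whereas the converse needs a \emph{universal} statement (all $z\in Z^k(u)$ large on coordinate $j$), supplied by the maximality of $u'$; the definition of $z_j^{\max}(u)$ via an inner minimum is precisely what makes these two quantifier patterns consistent. Care is also needed to verify that the case $u'\in B$ yields $\bz_j=u'_j$ rather than a strict inequality, so that the final contradiction still goes through.
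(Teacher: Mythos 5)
Your proposal is correct and follows essentially the same route as the paper's proof: the forward direction is the paper's argument in direct (contrapositive) form, using Proposition~\ref{lemma:ab} applied to $\bu=(\bz_j,u_{-j})$ together with the containment $Z^k(\bu)\subset Z^k(u)$, and the converse matches the paper's contradiction argument via Proposition~\ref{rem:basic}, the existence of $k\neq j$ with $u_k<u'_k$, and the universal bound $z_j\geq u'_j$ on $Z^k(u)$. The only points the paper states more tersely are the nonemptiness of $Z^k(u)$ (via Proposition~\ref{lemma:ab}) and the case split $\bz_j<u'_j$ versus $\bz_j=u'_j$, both of which you handle explicitly and correctly.
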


\begin{proof} Let $u\in U(N)$ and $\bz\in Z\setminus N$ be a point 
not dominated by any point of $N$ such that $\bz<u$.
\begin{itemize}
 \item[$(\Rightarrow)$] 
  Suppose that $\bu=(\bz_j,u_{-j})$ is a \lub{} in $U(N\cup\{\bz\})$ 
  and to the contrary $\bz_j \leq z_j^{\max}(u)$.
  Then, there is $k \neq j$ 
  such that $\bz_j \leq z_j$ for all $z\in Z^k(u)$.
  Since $\bu \leq u$ and $\bu_k = u_k$, 
  it holds that $Z^k(\bu) \subset Z^k(u)$ 
  but for any $z \in Z^k(u)$, $z_j \not< \bz_j = \bu_j$.
  Hence, $Z^k(\bu) = \emptyset$ which contradicts 
  Proposition~\ref{lemma:ab}. 

 \item[$(\Leftarrow)$] 
  Assume that for a given $j\in \onetod$, ${\bz}_j > z^{\max}_j(u)$.
  Suppose, to the contrary, that $(\bz_j,u_{-j})$ 
  is not a \lub{} for ${N}\cup\{\bz\}$, 
  that is,
  it dominates a \lub{} of $U(N\cup\{\bz\})$.
	Hence from Proposition~\ref{rem:basic}, there exists $u'\in U(N)$,
	$(\bz_j,u_{-j})\leq(\bz_j,u'_{-j})$ (possibly with $\bz_j = u'_j$).

  Then we have $u_{-j} \leq u'_{-j}$ which implies $u_j>u'_j$ 
  and $u_k<u'_k$ for some $k\neq j$. 
  From Proposition~\ref{lemma:ab}, the set $Z^k(u)$ is non-empty.
  For any $z\in Z^k(u)$, we have $z_{-k}<u_{-k}$ 
  and thus, since $k \neq j$, $z_{-j}<u'_{-j}$ but since $u'$ is a \lub{}, 
  we must have $z_{j}\geq u'_{j}$ (otherwise $z<u'$). 
  Hence, there is $z\in Z^k(u)$ such that $z_j^{\max}(u) \geq z_{j} \geq u'_j$. 
  Since we have both $\bz_j>z_j^{\max}(u)$ and $\bz_j \leq u'_j$, 
  we obtain a contradiction:
  $\bz_j \leq u'_j \leq z^{\max}_j(u) < \bz_j$.
 \end{itemize}
\end{proof}

We illustrate the general case on the second example instance 
of Section~\ref{sec:properties_ex}.

\begin{example}[continued] 
In Figure~\ref{fig:3d_ngp1}, we consider the situation 
with $N = \{z^1, z^2, z^3\}$ where $z^1=(2,7,7)$, $z^2=(5,7,5)$ 
and $z^3=(8,7,3)$.
We only look at the \lub{} $u=(M,7,M)$.
We have $Z^1(u)=\{\dummy^1\}$, $Z^2(u)=\{z^1, z^2, z^3\}$, and
$Z^3(u)=\{\dummy^3\}$. Thus $z^{\max}(u)=(2,m,3)$. 
The projections of a point $\bz$ that strictly dominates $u$ 
will be kept as non-redundant \lub{}s depending on the comparisons 
between the component values of $\bz$ and $z^{\max}(u)$ only.
\end{example}

Algorithm~\ref{alg:enh_incr_ngp} presents 
the update procedure in the general case.
The initialization is done with $U(\emptyset)=\{\BM\}$ and $Z^j(\BM)=\{\dummy^j\}$,
$j\in\onetod $.

 \begin{algorithm}
\SetKwInOut{Input}{input}\SetKwInOut{Output}{output}
\SetKw{SuchThat}{such that}

\Input{$U(N)$ together with $Z^j(u)$ for all $j\in\onetod$, 
$u\in U(N)$
\tcp*{Set of local upper bounds and associated defining points}}

\Input{$\bz$\tcp*{New point}}

\Output{$U(N\cup \{\bz\})$}

\BlankLine

$A\leftarrow\{u\in U(N):\bz<u\}$ \tcp*{Search zones that contain $\bz$}

$P\leftarrow\emptyset$

\tcp{Update sets $Z^j (u)$ when $\bz$ satisfies the conditions 
of Proposition~\ref{lemma:ab}}

\For{$u\in U(N)$ and $j\in\onetod$ \SuchThat $\bz_j=u_j$ 
and $\bz_{-j} < u_{-j}$}{
	$Z^j(u) \leftarrow Z^j(u) \cup \{\bz\}$	
}

\For{$u\in A$}{
	\For{$j\in\onetod$}{
		$z_j^{\max}(u)\leftarrow \max_{k\neq j} \min \{z_j : z \in Z^k (u)\}$
		
		\tcp{Check for the condition of Theorem~\ref{th:incremental_ngp}}
		
		\If{$\bz_j>z^{\max}_j(u)$} 
		{
			\tcp{Let $u^j = (\bz_{j},u_{-j})$}
			
			$P\leftarrow P \cup \{u^j\}$
			
			$Z^j(u^j) \leftarrow \{\bz\}$
			
			\For {$k \in \onetod\setminus\{j\}$} {
				$Z^k(u^j)\leftarrow \{z\in Z^k(u) : z_j < \bz_j\}$\nllabel{alg:enh_incr_ngp:Zk}
			}
		}
	}
}

$U(N\cup \{\bz\})\leftarrow (U(N)\setminus A) \cup P$

\caption{Update procedure of an \ubs{} based on the avoidance of redundancies:
\emph{general case}\label{alg:enh_incr_ngp}}
\end{algorithm}

\section{Complexity and computational experiments}
\label{sec:complexity_experiments}

In Sections~\ref{sec:existence} and \ref{sec:properties}, we described 
two incremental approaches for the update of an \ubs{}.
In Section~\ref{sec:existence}, the approach is based on
redundancy elimination (RE) among \lub{}s, 
while in Section~\ref{sec:properties} it is based on 
redundancy avoidance (RA) with respect to \lub{}s.

We first report upper bounds on the total number of \lub{}s associated
to a discrete set of points $N$.
Then we study the complexities of the RE and RA approaches.
Finally, we present some computational experiments that compare
these approaches.

\subsection{Tight upper bound on the number of local upper bounds}
\label{sec:counting}

None of the incremental algorithms proposed in the literature,
even in the \GP{} case, 
make it possible to directly derive a non-trivial upper bound 
on the size of any upper bound set $U(N)$ for $p\geq 4$.

For $p=2$, the number of \lub{}s is clearly $|N|+1$ (see Example~\ref{ex:bi_obj}).
For $p=3$ we recall that \citet{DaeKla14} showed that it is upper bounded by
$2|N|+1$ and is exactly $2|N|+1$ in the \GP{} case.

For an arbitrary $p\geq 2$, \citet{KapRubShaVer08} provide a tight upper bound
on the size of $U(N)$.
Following \citet{BoiShaTagYvi98} who study
the complexity of a union of axis-parallel hypercubes, 
they show that the number of maximal empty orthants 
with respect to a stable set $N$ 
is $O(|N|^{\lfloor \frac{p}{2} \rfloor})$.
They also provide an instance for which this number 
is $\Omega(|N|^{\lfloor \frac{p}{2} \rfloor})$.
Therefore, and recalling that maximal empty orthants 
are in one-to-one correspondence with local upper bounds, 
$O(|N|^{\lfloor \frac{p}{2} \rfloor})$ is a tight upper bound 
on the total number of local upper bounds associated to a stable set $N$.

\subsection{Worst-case complexities of the algorithms}

In this section, we analyze the worst case behavior of the two approaches.
We consider the dimension $p$ of the problem as a fixed parameter.
The reference algorithm for the RE approach 
will be Algorithm~\ref{alg:basic_incr2} 
while the reference algorithm for the RA approach 
will be Algorithm~\ref{alg:enh_incr} in the \GP{} case, and
Algorithm~\ref{alg:enh_incr_ngp} in the general case.

\paragraph*{Common steps of both approaches} 
In the \GP{} case, both approaches first compute the set $A$ 
of \lub{}s whose associated search zones contain $\bz$.
This amounts to $|U(N)|$ dominance tests if $U(N)$ is stored 
as a simple linked list.
If $A$ is small compared to $U(N)$, it is possible 
to reduce the complexity of these operations. 
Actually, since the elements of $A$ are those \lub{}s located 
in the hyperrectangle $\prod_{j=1}^p (\bz_j, M)$,
they can be obtained by an orthogonal range query on the set $U(N)$
\citep[see][Chapter~5]{deBChevanOve08}. 
In the case $p=2$, $U(N)$ can be efficiently stored 
in a simple balanced binary search tree.
For $p\geq 3$, as in the case of the algorithm of \citet{KapRubShaVer08},
$U(N)$ can be stored in a dynamic $p$-dimensional range tree
\citep[see e.g.][]{WilLue85},
which allows insertions and deletions in $O(\log^{p} |U(N)|)$ time 
and orthogonal range queries in
$O(\log^{p} |U(N)|+|A|)$ time. 
We note that \emph{augmented dynamic range trees} 
\citep[Theorem 8]{MehNah90} lower the ``log'' factors 
to ${\log^{p-1} |U(N)|}{\log\log |U(N)|}$.

\paragraph{Remaining steps}

We assume that $p\geq 3$ 
since it can be easily seen that both approaches operate 
identically in the case $p=2$.
Both approaches consider $p|A|$ candidate \lub{}s.

We first consider the \GP{} case. 
We focus on the operations 
on which Algorithm~\ref{alg:basic_incr2} (RE approach) and 
Algorithm~\ref{alg:enh_incr} (RA approach) differ.
They correspond to
Steps~\ref{alg:basic_incr2:start}-\ref{alg:basic_incr2:filtering} (Algorithm~\ref{alg:basic_incr2})
and Steps~\ref{alg:enh_incr:first_step}-\ref{alg:enh_incr:final_step} (Algorithm~\ref{alg:enh_incr}),
and respectively involve sets $P_j$, $j\in\onetod $,
and $P$.

\begin{proposition}
The worst-case complexity of Steps~\ref{alg:basic_incr2:start}-\ref{alg:basic_incr2:filtering} in Algorithm~\ref{alg:basic_incr2}
is bounded by 
$O(|A|^2)$.
\end{proposition}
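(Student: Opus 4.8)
The plan is to treat $p$ as a fixed constant, so that any factor of $p$ (including the $O(p)$ cost of a single pairwise dominance test between two $p$-dimensional vectors) is absorbed into the $O(\cdot)$ notation, and then simply count the elementary operations performed in Steps~\ref{alg:basic_incr2:start}--\ref{alg:basic_incr2:filtering}.

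First I would dispose of the projection-generation loop (Steps~\ref{alg:basic_incr2:start}--\ref{alg:basic_incr2:projections}). It runs over all $u\in A$ and all $j\in\onetod$, inserting exactly one projection $(\bz_j,u_{-j})$ into $P_j$ for each pair $(u,j)$. Hence it performs $p|A|=O(|A|)$ insertions, and afterwards $|P_j|\leq |A|$ for every $j\in\onetod$, since each $u\in A$ contributes at most one element to a given $P_j$.

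The core of the argument is the filtering loop (Steps~\ref{alg:basic_incr2:start_filtering}--\ref{alg:basic_incr2:filtering}), in which each of the $|P_j|$ candidates in $P_j$ is tested for dominance against every element of $P_j\cup B_j$. To bound this by $O(|A|^2)$ the key fact I would invoke is that under \GP{} the sets $B_j$ are all empty. Indeed, $B_j$ collects those $u\in U(N)$ with $\bz_j=u_j$; but by Proposition~\ref{lemma:ab} every component value $u_j$ of a \lub{} equals $z_j$ for some $z\in\hN$, so $u_j$ is either the $j$th component of a genuine point of $N$ or one of the values $m,M$ taken by the dummy points. Since $\bz\in Z$ we have $m<\bz_j<M$, and since $\bz$ is in general position with the points of $N$ its $j$th component differs from theirs. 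Thus $\bz_j\neq u_j$ for all $u$ and all $j$, so $B_j=\emptyset$; this is precisely the observation already used in the proof of Theorem~\ref{th:incremental}. With $B_j=\emptyset$, filtering $P_j$ amounts to pairwise comparisons within $P_j$, costing at most $|P_j|^2\leq|A|^2$ dominance tests. Summing over the $p=O(1)$ sets $P_j$ gives $O(p|A|^2)=O(|A|^2)$, which dominates the $O(|A|)$ cost of generating the projections, yielding the claimed bound.

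I expect the only genuinely load-bearing step to be the emptiness of the $B_j$ under \GP{}; everything else is routine counting. Without that fact one would instead have to bound the contribution $|P_j|\cdot|B_j|$ of filtering against $B_j$, which requires controlling $|B_j|$ and is considerably less clean — but it is entirely unnecessary once general position is assumed.
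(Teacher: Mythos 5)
Your proof is correct and takes essentially the same approach as the paper, whose proof is a one-line observation that the cost is dominated by filtering each $P_j$ with $|P_j|=|A|$, hence $O(|A|^2)$ in total. The only difference is that you spell out in place why $B_j=\emptyset$ under \GP{} (a fact the paper records parenthetically in the proof of Theorem~\ref{th:incremental} and uses implicitly by restricting this complexity analysis to the \GP{} case, deferring the extra $|B|$ term to its later discussion of the general case); there are no gaps.
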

\begin{proof}
The complexity of these steps is dominated by the filtering 
(Steps~\ref{alg:basic_incr2:start_filtering} and \ref{alg:basic_incr2:filtering})
of each $P_j$,
$j\in\onetod $,
where $|P_j|=|A|$,
therefore the total time is $O(|A|^2)$.
\end{proof}

This can be reduced 
to $O(|A|\log |A|)$ in the case $p \in \{2, 3\}$ \citep{KunLucPre75} and 
$O(|A|\log^{p-3} |A| \log \log |A|)$ in the case $p \geq 4$ \citep{GabBenTar84}
using some specialized algorithms.

\begin{proposition}
The worst-case complexity of Steps~\ref{alg:enh_incr:first_step}-\ref{alg:enh_incr:final_step} in Algorithm~\ref{alg:enh_incr}
is bounded by 
$O(|A|)$.
\end{proposition}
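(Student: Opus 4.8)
The plan is to bound the work done by Algorithm~\ref{alg:enh_incr} on the inner double loop over $u\in A$ and $j\in\onetod$, showing that each of the $p|A|$ candidate projections is processed in $O(1)$ amortized time. Since $p$ is treated as a fixed constant, the dominant contribution must come from operations whose cost does not secretly depend on $|U(N)|$ or on the sizes of the defining-point structures. First I would observe that for each fixed $u\in A$ the loop runs over the $p$ components, and for each component $j$ the algorithm performs: the computation of $z_j^{\max}(u)$ at Step~\ref{alg:enh_incr:zmax}, a single comparison $\bz_j>z^{\max}_j(u)$ at Step~\ref{alg:enh_incr:begin}, and, if the test succeeds, the creation of the new \lub{} $u^j$ together with the transfer of its $p$ defining points in Steps~\ref{alg:enh_incr:keep_z}-\ref{alg:enh_incr:final_step}.

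The key point is that in the \GP{} case each \lub{} $u$ stores exactly $p$ defining points $z^1(u),\dots,z^p(u)$, one per component, rather than arbitrarily large sets $Z^j(u)$. Hence the computation $z_j^{\max}(u)=\max_{k\neq j}\{z_j^k\}$ is a maximum over $p-1$ stored scalar values, which costs $O(p)=O(1)$; similarly copying $z^k(u^j)\leftarrow z^k(u)$ for the $p-1$ indices $k\neq j$ and setting $z^j(u^j)\leftarrow\bz$ costs $O(p)=O(1)$. I would emphasize that, crucially, there is no filtering step here: by Theorem~\ref{th:incremental} the membership of $(\bz_j,u_{-j})$ in $U(N\cup\{\bz\})$ is decided by the single threshold comparison $\bz_j>z_j^{\max}(u)$, so no candidate in $P$ is ever compared against any other candidate. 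This is exactly the feature that distinguishes the RA approach from the $O(|A|^2)$ filtering cost in Algorithm~\ref{alg:basic_incr2}.

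Putting this together, the body executed for each pair $(u,j)$ costs $O(1)$, so the total over all $u\in A$ and all $j\in\onetod$ is $O(p|A|)=O(|A|)$, and the final reassignment $U(N\cup\{\bz\})\leftarrow(U(N)\setminus A)\cup P$ adds at most $O(p|A|)=O(|A|)$ as well (removing $|A|$ elements and inserting at most $p|A|$). I would note that the main subtlety, rather than a genuine obstacle, is justifying that the bookkeeping of defining points genuinely stays at size $p$ per \lub{} throughout all iterations; this is where the earlier remark that $z^{\max}(u^j)$ cannot be reconstructed from $z^{\max}(u)$ alone (illustrated in the continued Example) matters, since it forces us to keep the full $p$-tuple of defining points, but that tuple has constant size, so the per-operation cost is unaffected. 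The only place one must be careful is not to conflate this per-update $O(|A|)$ bound with the cost of locating $A$ itself, which was already accounted for among the common steps via the orthogonal range query.
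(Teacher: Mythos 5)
Your argument is correct and follows essentially the same route as the paper: since no dominance tests among candidates are needed, each of the $p|A|$ candidate projections costs $O(1)$ (constant-time computation of $z_j^{\max}(u)$ over $p-1$ stored scalars and constant-time transfer of the $p$ defining points), giving $O(|A|)$ overall for fixed $p$. Your additional remarks on the final set reassignment and on keeping the full $p$-tuple of defining points are consistent elaborations of the paper's terser proof.
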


\begin{proof}
In Algorithm~\ref{alg:enh_incr}, 
no additional dominance test is performed with the \lub{}s of $P$,
but the values $z_j^{\max}(u)$ need to be computed just before they are needed,
each of which takes 
constant time.
Also the references to the $p$ points that define each \lub{} have to be updated
which takes 
constant time
for each new upper bound.
The total time of these operations is thus $O(|A|)$.
\end{proof}

In the general case, the number of \lub{}s 
against which candidate \lub{}s have to be checked for dominance 
in the RE approach
just grows by an additional $|B|$. 
In the RA approach adapted to the general case, namely Algorithm~\ref{alg:enh_incr_ngp}, 
it is possible that $|N|$ points have to be considered 
in a set $Z^k(u)$
at Step~\ref{alg:enh_incr_ngp:Zk}.
This leads to an upper bound on the complexity of 
$O(|N||A|)$ 
in Algorithm~\ref{alg:enh_incr_ngp}.

In practice, however, the size of the sets $Z^k(u)$ is rather small depending on
how many points in $N$ share the same component values.
Note that according to \citet{BoiShaTagYvi98}, an alternative approach would be
to slightly shift those points in $N$ that do not satisfy \GP{} such that
the resulting set satisfies \GP{}.
Then Algorithm~\ref{alg:enh_incr} can be applied, yielding a complexity of 
$O(|A|)$. 
Similarly, ties in the comparisons 
of any $j$th component values for points $z^k$ and $z^l$
could be resolved by a lexicographic comparison ``$<_{\rm lex}$'' where
$z^k_j <_{\rm lex} z^l_j$ if $z^k_j< z^l_j$ or if $z^k_j=z^l_j$ and $k<l$,
which would replace the natural comparison ``$<$'' between reals (and similarly for ``$>$'') 
in Algorithm~\ref{alg:enh_incr}.
However these approaches yield redundant search zones that, 
in the context of Algorithm~\ref{alg:basic_moco},
induce redundant solver calls.

\subsection{Experimental comparison of the algorithms}

In this section we investigate the behavior of the RE 
and RA approaches on random instances. 

\paragraph{Experimental setup}
We implemented Algorithm~\ref{alg:basic_incr2} for the RE approach
and Algorithm~\ref{alg:enh_incr_ngp} for the RA approach.
Both algorithms
were implemented in C.
The experiments were run on a workstation equipped with
an Intel Core i7-3840QM CPU at 2.80GHz with 8MB cache and
32GB RAM.
For both algorithms, we considered the version 
that does not require \GP{}, 
since the assumption cannot be made in most applications.

As test instances, we generated random stable
sets of points $N$. 
In order to obtain a new point in the random stable set being generated, 
we uniformly draw from the integer set $[1,K]^p$ 
and reject the points that are dominated by or dominate 
any of the previous points.
We draw without or with replacement in $[1,K]$, respectively,
to obtain points satisfying \GP{}
or not.
In the general case, 
the parameter $K$ controls to what extent
objective values are shared among feasible points.
In the \GP{} case, $K$ is just set 
to a very large integer.
Since in both cases the distribution of each point is conditioned 
by the requirement that it is neither dominated by nor dominates any previously generated point,
the generated points are eventually randomly reordered.

We considered instances for $p\in \{3,4,5,6\}$ having 
$100~000$, $50~000$, $25~000$, and $5~000$ points, respectively.
We generated instances under \GP{} and also with possible identical component values.
In the general case, we set $K$ so as to obtain $\frac{|N|}{K} \in \{5, 10\}$.
The plots we made in the \GP{} case were obtained by recording
intermediate results every 500 points for $p\in \{3,4,5\}$ 
and every 100 points for $p=6$.
We also considered a pathological instance type in the general
case having many duplicated component values among points, with $p = 6$, $|N| = 10~000$ and $K = 10$.

We have drawn 10 instances of each type 
and the output results were averaged over
the 10 runs carried out for each instance type.

\paragraph{Observations on the \GP{} instances}

We provided above a theoretical tight upper bound 
on the number of \lub{}s in \GP{} instances.
Now we consider the empirical number of \lub{}s 
observed in our test instances for $p\in \{4,5,6\}$ 
(since this number is known exactly for $p=3$).
The results, 
which can be obtained by any of the two approaches, 
are reported on Figure~\ref{fig:nlub}.
According to Figure~\ref{fig:nlub}, it seems that on such random instances, the number
of \lub{}s grows approximately linearly in the number of points.
\citet{KapRubShaVer08} show that the number of maximal empty axis-parallel boxes
in a set of $n$ points drawn uniformly and independently from $[0,1]^p$
is $O(n\log^{p-1}n)$, 
therefore the growth observed in our experiments may be superlinear.
However, the distribution of our points is not the same 
since we discard points that dominate
or are dominated by previously drawn points
and the bound of \citet{KapRubShaVer08} does not count only maximal empty orthants.

Observing from Figure~\ref{fig:nlub} the apparently linear relation between $|N|$ and $|U(N)|$,
we performed a simple linear regression.
We obtained the following slopes for the fitted lines: 6.524 for $p=4$, 31.86 for $p=5$,
and 165.9 for $p=6$. 
This gives an insight on the increase in the number of \lub{}s
induced by the consideration of a new point in the search region,
i.e. the average $|U(N\cup\{\bz\})|-|U(N)|$.

\begin{figure}[!htbp]
\begin{centering}
\includegraphics{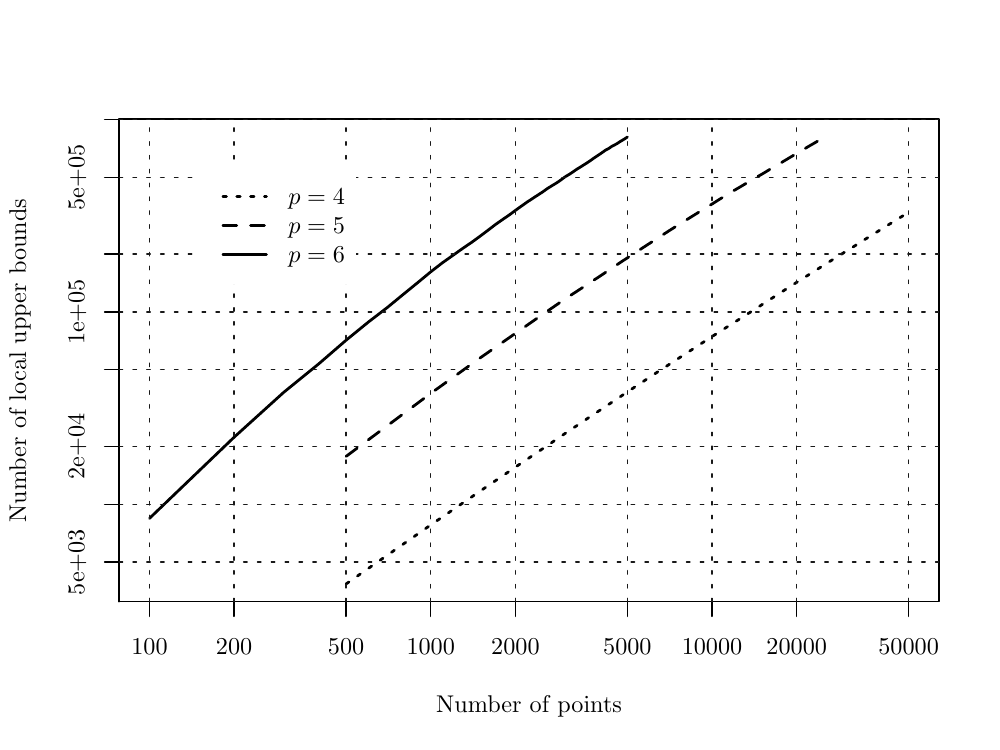}
\par\end{centering}
\caption{Number of \lub{}s on \GP{} instances 
(logarithmic scales for both axes)
\label{fig:nlub}}
\end{figure}

We also computed the average number of search zones that contain 
the current point (namely $|A|$) between two observations. 
Since these values do not vary much on the tested instance, 
we provide the averages over all instances of all sizes: 3.999 for $p=3$, 21.56 for $p=4$,
141.67 for $p=5$, and 735.9 for $p=6$.

From the average $|U(N\cup\{\bz\})|-|U(N)|$ and $|A|$, 
we compute the ratio $\frac{|U(N\cup\{\bz\})|-|U(N)|}{|A|}$.
We obtain 
$0.5001$ for $p=3$, $0.3025$ for $p=4$, $0.2249$ for $p=5$, and $0.2255$ for $p=6$.
This indicates that the number of additional search zones 
induced by each search zone that contains the current feasible point remains small.

\paragraph{Comparison of the algorithms}

We first provide raw computation times in Table~\ref{tab:times},
showing the performance of the RE and RA approaches on \GP{} and on general instances.

\begin{table}
\begin{center}
\subfloat[\GPl{}]{
\scalebox{0.9}{
\begin{tabular}{r@{\extracolsep{0pt}.}l|r@{\extracolsep{0pt}.}lr@{\extracolsep{0pt}.}lr@{\extracolsep{0pt}.}lr@{\extracolsep{0pt}.}l|r@{\extracolsep{0pt}.}lr@{\extracolsep{0pt}.}lr@{\extracolsep{0pt}.}lr@{\extracolsep{0pt}.}l}
\multicolumn{2}{c}{} & \multicolumn{8}{c}{RE approach} & \multicolumn{8}{c}{RA approach}\tabularnewline
\hline 
\multicolumn{2}{c|}{\backslashbox{$|N|$}{$p$}} & \multicolumn{2}{c}{3} & \multicolumn{2}{c}{4} & \multicolumn{2}{c}{5} & \multicolumn{2}{c|}{6} & \multicolumn{2}{c}{3} & \multicolumn{2}{c}{4} & \multicolumn{2}{c}{5} & \multicolumn{2}{c}{6}\tabularnewline
\hline 
\multicolumn{2}{r|}{5~000} & 0&232 & 1&09 & 26&3 & 614&0 & 0&289 & 2&18 & 35&2 & 179&0\tabularnewline
\multicolumn{2}{r|}{25~000} & 6&93 & 93&4 & 830&0 & \multicolumn{2}{c|}{-} & 16&6 & 154&0 & 951&0 & \multicolumn{2}{c}{-}\tabularnewline
\multicolumn{2}{r|}{50~000} & 44&6 & 509&0 & \multicolumn{2}{c}{-} & \multicolumn{2}{c|}{-} & 122&0 & 673&0 & \multicolumn{2}{c}{-} & \multicolumn{2}{c}{-}\tabularnewline
\multicolumn{2}{r|}{100~000} & 387&0 & \multicolumn{2}{c}{-} & \multicolumn{2}{c}{-} & \multicolumn{2}{c|}{-} & 664&0 & \multicolumn{2}{c}{-} & \multicolumn{2}{c}{-} & \multicolumn{2}{c}{-}\tabularnewline
\hline 
\end{tabular}}
}

\subfloat[\NGPl{}, $\frac{|N|}{K} = 5$]{
\scalebox{0.9}{
\begin{tabular}{r@{\extracolsep{0pt}.}l|r@{\extracolsep{0pt}.}lr@{\extracolsep{0pt}.}lr@{\extracolsep{0pt}.}lr@{\extracolsep{0pt}.}l|r@{\extracolsep{0pt}.}lr@{\extracolsep{0pt}.}lr@{\extracolsep{0pt}.}lr@{\extracolsep{0pt}.}l}
\multicolumn{2}{c}{} & \multicolumn{8}{c}{RE approach} & \multicolumn{8}{c}{RA approach}\tabularnewline
\hline 
\multicolumn{2}{c|}{\backslashbox{$|N|$}{$p$}} & \multicolumn{2}{c}{3} & \multicolumn{2}{c}{4} & \multicolumn{2}{c}{5} & \multicolumn{2}{c|}{6} & \multicolumn{2}{c}{3} & \multicolumn{2}{c}{4} & \multicolumn{2}{c}{5} & \multicolumn{2}{c}{6}\tabularnewline
\hline 
\multicolumn{2}{r|}{5~000} & 0&192 & 0&882 & 13&4 & 530&0 & 0&24 & 1&15 & 18&7 & 166&0\tabularnewline
\multicolumn{2}{r|}{25~000} & 6&18 & 68&3 & 767&0 & \multicolumn{2}{c|}{-} & 9&53 & 112&0 & 862&0 & \multicolumn{2}{c}{-}\tabularnewline
\multicolumn{2}{r|}{50~000} & 36&0 & 463&0 & \multicolumn{2}{c}{-} & \multicolumn{2}{c|}{-} & 77&0 & 582&0 & \multicolumn{2}{c}{-} & \multicolumn{2}{c}{-}\tabularnewline
\multicolumn{2}{r|}{100~000} & 339&0 & \multicolumn{2}{c}{-} & \multicolumn{2}{c}{-} & \multicolumn{2}{c|}{-} & 498&0 & \multicolumn{2}{c}{-} & \multicolumn{2}{c}{-} & \multicolumn{2}{c}{-}\tabularnewline
\hline 
\end{tabular}}}

\subfloat[\NGPl{}, $\frac{|N|}{K} = 10$]{
\scalebox{0.9}{
\begin{tabular}{r@{\extracolsep{0pt}.}l|r@{\extracolsep{0pt}.}lr@{\extracolsep{0pt}.}lr@{\extracolsep{0pt}.}lr@{\extracolsep{0pt}.}l|r@{\extracolsep{0pt}.}lr@{\extracolsep{0pt}.}lr@{\extracolsep{0pt}.}lr@{\extracolsep{0pt}.}l}
\multicolumn{2}{c}{} & \multicolumn{8}{c}{RE approach} & \multicolumn{8}{c}{RA approach}\tabularnewline
\hline 
\multicolumn{2}{c|}{\backslashbox{$|N|$}{$p$}} & \multicolumn{2}{c}{3} & \multicolumn{2}{c}{4} & \multicolumn{2}{c}{5} & \multicolumn{2}{c|}{6} & \multicolumn{2}{c}{3} & \multicolumn{2}{c}{4} & \multicolumn{2}{c}{5} & \multicolumn{2}{c}{6}\tabularnewline
\hline 
\multicolumn{2}{r|}{5~000} & 0&167 & 0&772 & 12&1 & 447&0 & 0&218 & 0&978 & 16&7 & 150&0\tabularnewline
\multicolumn{2}{r|}{25~000} & 5&99 & 65&6 & 754&0 & \multicolumn{2}{c|}{-} & 7&84 & 107&0 & 839&0 & \multicolumn{2}{c}{-}\tabularnewline
\multicolumn{2}{r|}{50~000} & 32&8 & 447&0 & \multicolumn{2}{c}{-} & \multicolumn{2}{c|}{-} & 69&2 & 564&0 & \multicolumn{2}{c}{-} & \multicolumn{2}{c}{-}\tabularnewline
\multicolumn{2}{r|}{100~000} & 325&0 & \multicolumn{2}{c}{-} & \multicolumn{2}{c}{-} & \multicolumn{2}{c|}{-} & 459&0 & \multicolumn{2}{c}{-} & \multicolumn{2}{c}{-} & \multicolumn{2}{c}{-}\tabularnewline
\hline 
\end{tabular}}}
\end{center}
 \caption{Average computation times (in seconds) for both approaches
 \label{tab:times}}
\end{table}

Since the computation times of the algorithms we consider 
to generate $U(N\cup\{\bz\})$
are both $\Omega (|U(N)|)$,
we also present normalized computation times.
Figure~\ref{fig:comp_time_gp} shows running times divided
by $|U(N)|$.

\begin{figure}[!htbp]
\begin{centering}
\subfloat[$p=3$]{\includegraphics[width=.5\textwidth]{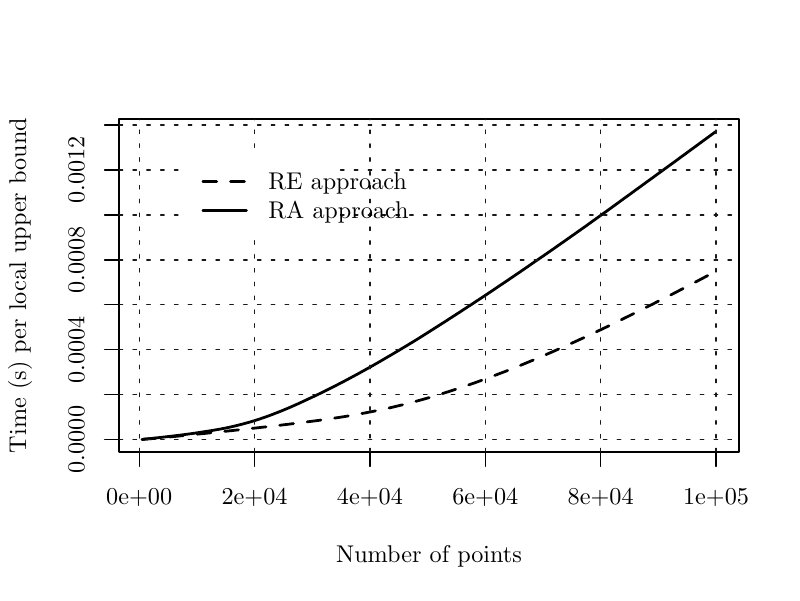}}
\subfloat[$p=4$]{\includegraphics[width=.5\textwidth]{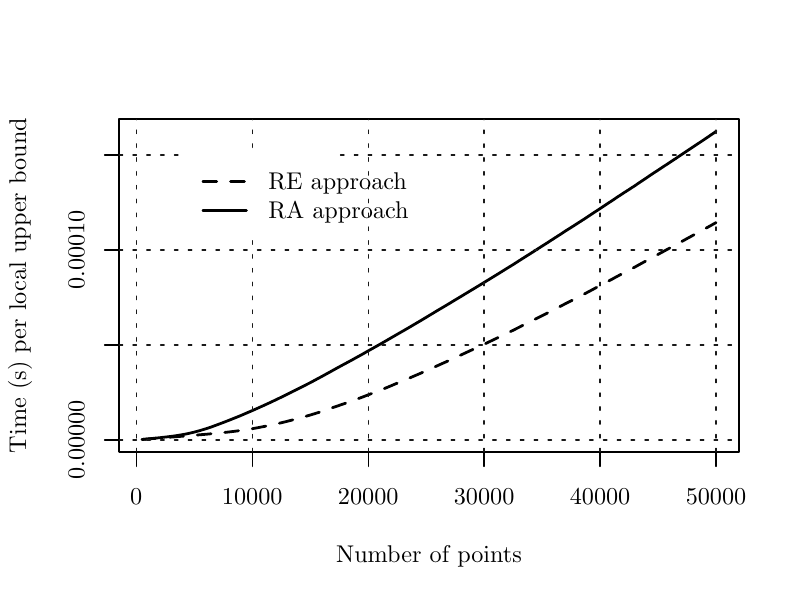}}\\
\subfloat[$p=5$]{\includegraphics[width=.5\textwidth]{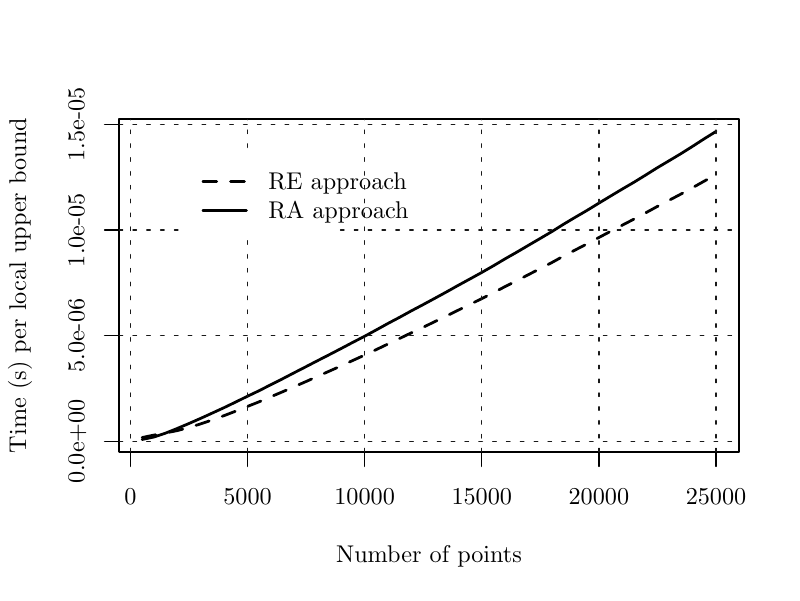}}
\subfloat[$p=6$]{\includegraphics[width=.5\textwidth]{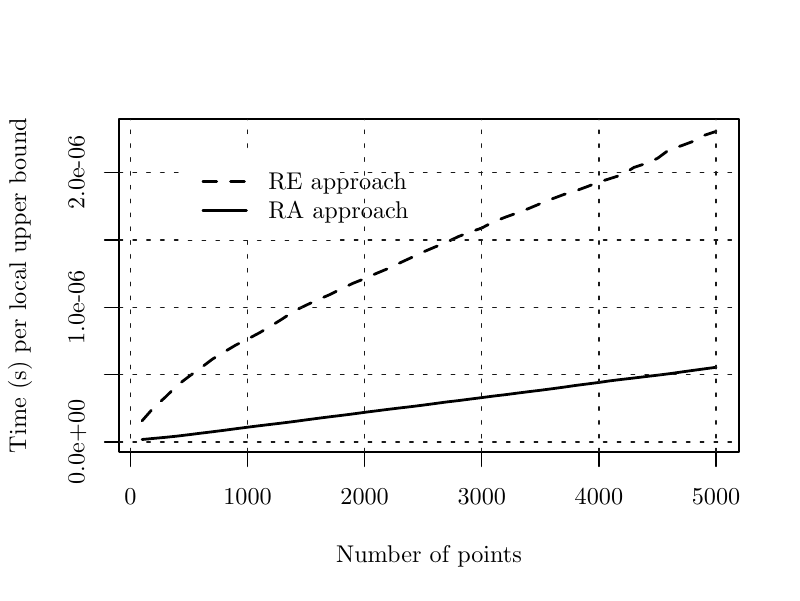}}
\end{centering}
\caption{Comparison of the normalized running times of the two algorithms in the \GP{} case
\label{fig:comp_time_gp}}
\end{figure}

According to these results, the RE approach remains the best one
in terms of computation time 
for $p \in \{3,4,5\}$, the values being rather close in the case $p = 5$.
The RA approach however outperforms the RE approach for $p=6$.
These observations hold for \GP{} and general case instances 
but the gaps between the relative efficiency of the approaches are larger on \GP{} instances.
Besides, additional computational experiments on \GP{} instances with $p\in \{7, 8\}$ 
showed that the RA approach performs even better above $p=6$.
Namely, we obtained the following average computation times (RE time in seconds, RA time in seconds):
($161.69$, $5.046$) for $p=7$, $n=500$, 
($1~240.11$, $27.458$) for $p=7$, $n=1~000$, 
($24.06$, $0.348$) for $p=8$, $n=125$,
($423.15$, $2.330$) for $p=8$, $n=250$.

We also observed in our experiments that, even starting from $p\geq3$,
a little fewer component comparisons 
are made in the RA approach than in the RE approach.
Finally, we ran the implementations under Cachegrind, a CPU caches profiling tool.
We observed for the RA approach a larger use of the slowest caches, L2 and L3,
than for the RE approach.
This, together with the smaller average $|A|$ observed 
on low dimensional instances, 
explains why the implementation of the RA approach 
performs worse than the one of the RE approach for $p \leq 5$.

To observe the effect of highly duplicated component values among distinct points,
we also tested the approaches on the pathological instances 
($p=6$, $|N|=10~000$, and $K=10$).
The average $|U(N)|$ and $|A|$ are much smaller 
than in the \GP{} case, being respectively $14~228.4$ and 33.44.
Due to the fact that many points share the same component values,
the sets $Z^k(u)$ in Algorithm,\ref{alg:enh_incr_ngp} 
can grow significantly,
reaching the maximum value of $1~109.7$, averaged on the test instances.
Therefore, the computation time of the RA approach is a little longer 
than the one of the RE approach (1.95 against 1.51 seconds).

\section{Conclusions}\label{sec:conclusions}

We addressed in this {\DocType} the problem of representing the search region 
in MOO.
The concept itself is used in numerous approaches 
to compute the nondominated set.
We provided several equivalent definitions of the search region.
Local upper bounds induce a decomposition of the search region 
into search zones.
We reviewed possible uses of this decomposition
to enumerate all nondominated points of an MOCO problem.
We presented two incremental approaches to compute the \lub{}s that represent 
a search region, respectively based on ``redundancy elimination (RE)''
and ``redundancy avoidance (RA)''. 
The first encompasses an
already known algorithm for which we proposed 
some enhancements to its filtering step. 
The second is derived from theoretical properties of \lub{}s we studied
and avoids the filtering step of the former.
Finally, we considered the complexity of the representation of the search region
by \lub{}s and gave some insights into the theoretical complexities
and the practical efficiencies of the two incremental approaches.
In particular, we showed that the RA approach developed in this {\DocType}
performs better than the RE approach starting from dimension 6
on instances where the objective ranges are not too small.

The future work directions are numerous.
Although the RA approach is practically less efficient than the RE approach 
in low dimensions, it maintains, contrary to the latter, 
a relation between feasible points and \lub{}s.
This makes it possible to define a neighborhood between \lub{}s, 
as in \citet{DaeKla14} in the case $p=3$,
that can be exploited in order to update the search region more efficiently
when a search zone containing the new feasible point is known.
Derivatives of the concept of search region defined in this {\DocType} 
could also be considered. 
Actually, we made no assumption on the feasible
points that define the search region, apart from the requirement
that they constitute a stable set of points.
If e.g. the feasible points are optimal with respect to one objective function,
some search zones may be discarded.
Note also that the search zones that are defined in this {\DocType}
are bounded below by the same point $\Bm$.
It may be interesting to bound below each search zone using some 
\emph{local lower bounds}
such that the union of the corresponding restricted search zones
still contains all unknown nondominated points.

\section*{Acknowledgments}

We acknowledge Carlos M. Fonseca from Universidade de Coimbra, Portugal, 
for pointing us to references describing the complexity 
of the upper bound set given in Section~\ref{sec:counting}.

\bibliographystyle{abbrvnat}
\bibliography{manuscript_arxiv}

\end{document}